\newtheorem{theorem}{Theorem}[section]
\newtheorem{lemma}[theorem]{Lemma}
\newtheorem{proposition}[theorem]{Proposition}
\newtheorem{remark}[theorem]{Remark}
\newtheorem{definition}[theorem]{Definition}
\newtheorem{question}[theorem]{Question}
\newtheorem{conjecture}[theorem]{Conjecture}
\numberwithin{equation}{section}
\begin{document}

\title{\Large\bf Truncated Sparse Approximation Property and Truncated $q$-Norm Minimization
\footnotetext{\hspace{-0.35cm}\endgraf Corresponding author: lipengmath@126.com (P. Li).
}}

\author{Wengu Chen and Peng Li\,$^\ast$}

\maketitle
\textbf{Abstract.} This paper considers approximately sparse signal and low-rank matrix's recovery via truncated norm minimization $\min_{x}\|x_T\|_q$
and $\min_{X}\|X_T\|_{S_q}$ from noisy measurements. We first introduce truncated sparse approximation property, a more general robust null space property, and establish the stable recovery of signals and matrices under the truncated sparse approximation property. We also explore the relationship between the restricted isometry property and truncated sparse approximation property. And we also prove that if a measurement matrix $A$ or linear map $\mathcal{A}$ satisfies truncated sparse approximation property of order $k$, then the first inequality in restricted isometry property  of order $k$ and of order $2k$ can hold for certain different constants $\delta_{k}$ and $\delta_{2k}$, respectively. Last, we show that if $\delta_{t(k+|T^c|)}<\sqrt{(t-1)/t}$ for some $t\geq 4/3$, then measurement matrix $A$ and linear map $\mathcal{A}$ satisfy truncated sparse approximation property of order $k$. Which should point out is that when $T^c=\emptyset$, our conclusion implies that sparse approximation property of order $k$ is weaker than restricted isometry property of order $tk$.

\textbf{Key Words and Phrases.} Truncated norm minimization, Truncated sparse approximation property, Restricted isometry property, Sparse signal recovery,
Low-rank matrix recovery, Dantzig selector.

\textbf{Mathematics Subject Classification}. {65K10,90C25,90C26,92C55}



\section{Introduction}\label{s1}

\hskip\parindent

The problem of sparse signal recovery naturally arises in genetics, communications and image processing. Prominent examples include DNA microarrays \cite{ES2005, PVMH2008}, wireless communications \cite{HS2009, THER2010}, magnetic resonance imaging \cite{LDP2007, VAHBPL2010},  and more. In such contexts, we often require to recover an unknown signal $x\in\mathbb{R}^n$ from an underdetermined system of linear equations
\begin{align}\label{systemequationsnoise}
b=Ax+z,
\end{align}
where $b\in\mathbb{R}^m$ are available measurements, the matrix $A\in\mathbb{R}^{m\times n}~(m<n)$ models the linear measurement process and
$z\in\mathbb{R}^m$ is a vector of measurement errors. The constrained $l_q$ minimization method estimates the signal $x$ by
\begin{align}\label{lqminimization}
\hat{x}=\arg\min_{x\in\mathbb{R}^n}\{\|x\|_q^q:~\text{subject~to}~b-Ax\in\mathcal{B}\}
\end{align}
where $0<q\leq 1$ and $\mathcal{B}$ is a set determined by the noise structure.  In  this paper, we consider two types of bounded noises \cite{CW2011}.
One is $l_p$ bounded noises \cite{DET2006}, i.e.,
\begin{align}\label{lpboundednoise}
\mathcal{B}^{l_p}(\eta)=\{z: \|z\|_p\leq\eta\}
\end{align}
for some constant $\eta$; and the other is motivated by \textit{Dantzig~Selector} procedure \cite{CT2007}, where
\begin{align}\label{Dantzigselectornoise}
\mathcal{B}^{DS}(\eta)=\{z: \|A^*z\|_\infty\leq\eta\}.
\end{align}
In particular, when $\mathcal{B}=\{0\}$, it is the noiseless case.

For the $l_q$-minimization problem (\ref{lqminimization}), there are many works under the restricted isometry property \cite{CT2005,CT2006,CS2008,CDD2009, CZ2013,CZ2013-1,CZ2014} and under the null space property \cite{DE2003, CDD2009, S2011, FR2013, F2014}. Let $x_k$ be the best $k$-sparse approximation vector of $x$ and $\sigma_k(x)_q$ be the error of best $k$-term approximation to the vector $x$ in $l_q$-norm.
Specially, we should point out that in 2011, Sun \cite{S2011} introduced the sparse approximation property
\begin{align*}
 \|x_k\|_r^q\leq \beta k^{q/r-1}\sigma_k(x)_q^q +D\|Ax\|_p^q
\end{align*}
where $0<q,p,r\leq\infty$, which was called sparse Riesz property in \cite{ST2016}.
And Sun \cite{S2011} showed that the sparse approximation property is an appropriate condition on a measurement matrix to consider stable recovery of any compressible
signal from its noisy measurements. We should point out that when $q=1$ and $p=2$, the sparse approximation property is equivalent to the $l_r$-robust null
space property \cite{FR2013,F2014}
\begin{align*}
\|x_K\|_r\leq \beta k^{1/r-1}\sigma_k(x)_1 +D\|Ax\|_2,
\end{align*}
where $K$ is any subset of $\subset \{1,\ldots,n\}$ with $|K|\leq k$. Here and what follows, $x_K$ is the vector equal to $x$ on $K$ and to zero on $K^c$; $K^c$ denotes the complementary set of $K$.

Although Basis Pursuit ((\ref{lqminimization}) for $q=1$) is much easy to solve, it requires significantly more measurements. In \cite{WY2010}, Wang and Yin proposed an iterative support detection (ISD) method that runs as fast as the best BP algorithms but requires significantly fewer measurements. From an incorrect reconstruction, support detection identifies an index set $I$ containing some elements of $\text{supp}(\bar{x})=\{j:x_j\neq 0\}$, and signal reconstruction solves
\begin{align}\label{TruncatedBP}
\min_x~\|x_T\|_1~\text{subject~to~}Ax=b,
\end{align}
where $T=I^c$. In fact, $I^{(s+1)}=\text{supp}(x^{(s)})$ in the $s$-th iteration of ISD algorithm.  In \cite{WY2010}, Wang and Yin  also generalized the null space property to the truncated null space property
\begin{align*}
\|v_K\|_1\leq\beta\|v_{T\cap K^c}\|_1 \end{align*}
where $T\subset\{1,\ldots,n\}$ is any set with $|T|=t$,  $K\subset T$ with $|K|\leq k$ and $v\in\mathcal{N}(A)$, and introduced an efficient implementation of ISD for recovering signals with fast decaying distributions of nonzeros from compressive sensing measurements.
Later, in her Dissertation, Zhang \cite{Z2013} considered the model (\ref{TruncatedBP}) in generalized case¡ª¡ªtruncated $l_q$-minimization with noise
\begin{align}\label{Truncatedq2}
\min_x~\|x_T\|_q^q~\text{subject~to~}\|Ax-b\|_2\leq\varepsilon,
\end{align}
where $0<q\leq 1$. Via restricted isometry property (RIP), she got a stable recovery of $x$ by (\ref{Truncatedq2}).

By the fact that the sparse approximation property is an additional strengthening of null space property in noisy case and is weaker than restricted isometry property,  therefore we consider the following question.

\begin{question} Whether can we introduce the truncated sparse approximation property for matrix $A$
and get a stable recovery of any compressible signal from its noisy measurements via the model
\begin{align}\label{Truncatedq}
\min_{x}~\|x_T\|_q^q ~~\text{subject~ to}~ ~b-Ax\in\mathcal{B},
\end{align}
where $0<q\leq 1$.
\end{question}
In this paper, we give affirmative answer for this question. We introduce the truncated sparse approximation property with $l_p$-norm constraint $\|b-Ax\|_p$
and Dantzig selector constraint $\|A^*(b-Ax)\|_{\infty}$, and get a stable recovery of signal $x$ via (\ref{Truncatedq}).

On the other hand, a closely related problem to compressed sensing is the low-rank matrix recovery,
which aims to recover an unknown low-rank matrix based on
its affine transformation
\begin{align}\label{Matrixsystemequationsnoise}
b=\mathcal{A}(X)+z,
\end{align}
where $X\in\mathbb{R}^{m\times n}$ is the decision variable and the linear map $\mathcal{A}: R^{m\times n}\rightarrow \mathbb{R}^l$, $z\in\mathbb{R}^l$ is the measurement error and vector $b\in\mathbb{R}^l$ is given. Low-rank matrices arise in an incredibly wide range of settings throughout science and applied mathematics. To name just a few examples, we commonly encounter low-rank matrices in contexts as varied as: ensembles
of signals \cite{DE2012,AR2015}, system identification \cite{LV2009}, adjacency matrices \cite{LLR1995}, distance matrices \cite{BLWY2006,BG2010,SY2007} and machine learning \cite{AEP2008,OTJ2010,D2004}.

Assume that the singular value decomposition of $X$ is $X=U\Lambda V^{T}=U\text{diag}(\lambda(X))V^{T}$ and $l=\min\{m,n\}$, where $\lambda(X)=(\lambda_1,\ldots,\lambda_{l})$ is the vector of singular values of the matrix $X$ with $\lambda_1\geq\lambda_2\geq\cdots\geq\lambda_{l}\geq 0$.
Let $\|X\|_{S_p}:=\|\lambda(X)\|_p=\big(\sum_{j=1}^{l}|\lambda_j(X)|^p\big)^{1/p}$ denote the Schatten $p$-norm of matrix $X$,  We should point out that $\|X\|_{S_1}=\|X\|_*$ the nuclear norm,
$\|X\|_{S_2}=\|X\|_F$ the Frobenius norm and $\|X\|_{S_{\infty}}=\|X\|_{2\rightarrow 2}$ the operator norm.
In 2010, Recht,  Fazel and Parrilo \cite{RFP2010} generalized the restricted isometry property (RIP) from vectors to matrices and showed that
if certain restricted isometry property holds for the linear transformation $\mathcal{A}$, the minimum-rank solution can be recovered by solving the
minimization of the nuclear norm
\begin{align*}
\min_{X}\|X\|_{*}~~\text{subject~to~}\mathcal{A}(X)=b. \end{align*}
Later, the low-rank matrix recovery problem has been studied by many scholars, readers can refer to
\cite{CP2011,CZ2013,CZ2013-1,CZ2014,CZ2015,ZL2017,ZL2017-1}
under matrix restricted isometry property and \cite{FR2013,KKRT2016} under rank null space property.

Now, a corresponding question is as follows.
\begin{question}
Whether can we introduce a truncated rank null space property, even a truncated rank sparse approximation property for
linear map $\mathcal{A}$ and obtain a stable recovery of the approximately low-rank matrix $X$ from its noisy measurements via the model
\begin{align}\label{MatrixTruncatedq}
\min_{X}~\|X_T\|_{S_q}^q ~~\text{subject~ to}~ ~b-\mathcal{A}(X)\in\mathcal{B},
\end{align}
where $\|X_T\|_{S_q}=\big(\sum_{j\in T}|\lambda_j(X)|^q\big)^{1/q}=\big(\sum_{j={l-t+1}}^l|\lambda_j(X)|^q\big)^{1/q}$ and $0<q\leq 1$.
\end{question}
Our answer to this question is also affirmative.

Our paper is organized as follows.
In Section \ref{s2}, we introduce truncated sparse approximation property for vector and matrix, respectively, and establish a stable recovery of approximately sparse signals and  low-rank matrices.
In Section \ref{s3}, we show that if a measurement matrix $A$ satisfies truncated sparse approximation property of order $k$, then the first inequality in restricted isometry property of order $k$ and of order $2k$ hold for certain different constants $\delta_{k}$ and $\delta_{2k}$, respectively.
And, we also show that if measurement matrix $A$ satisfies restricted isometry property with $\delta_{t(k+|T^c|)}<\sqrt{(t-1)/t}$, then $A$ also satisfies truncated sparse approximation property of order $k$. We note that when $T^c=\emptyset$, this conclusion implies that the $(l_2,l_1)-l_2$ sparse approximation property of order $k$ and and $(l_2,l_1)-\text{Dantzig~selector}$ sparse approximation property of order $k$ is weaker than restricted 2-isometry property of order $tk$.   And these relationships between the restricted isometry property and truncated sparse approximation property also hold for linear map $\mathcal{A}$. Finally, we summarize our conclusions and discuss the drawbacks of our results and file out some proper directions for coming study in Section 4.

\section{Truncated sparse approximation property and Stable Recovery \label{s2}}
\hskip\parindent

In this section, we will introduce $t$-truncated $(l_r,l_q)-l_p$ sparse approximation property and $t$-truncated $(l_r,l_q)-\text{Dantzig~selector}$ sparse approximation property of order $k$ for measurement matrix $A$, and get stable recovery of signal $x$ from (\ref{Truncatedq}) for $\mathcal{B}=\mathcal{B}^{l_p}(\eta)$ and $\mathcal{B}=\mathcal{B}^{DS}(\eta)$.  And we also introduce truncated rank sparse approximation property and establish the same stable recovery for approximately low-rank matrices.  Firstly, we consider the vector case.

\subsection{Vector Case \label{s2.1}}
\hskip\parindent

We introduce the following $t$-truncated sparse approximation property.

\begin{definition}\label{SparseRieszProperty}
Let $0<p,r\leq\infty$ and $0<q<\infty$. An $m\times n$ matrix $A$ satisfies $t$-truncated $(l_r,l_q)-l_p$ sparse approximation property of order $k$ with constants $D$ and $\beta$ if
\begin{align}\label{lpSparseRieszProperty}
\|x_{K}\|_r^q \leq D\|Ax\|_p^q + \beta k^{q/r-1}\sigma_k(x_{T})_q^q,
\end{align}
holds for all sets $T\subset\{1,\ldots,n\}$ with $|T|=t$, and all sets $K\subset T$-the index set of the $k$ largest (in magnitude) coefficients of $x_{T}$.

And an $m\times n$ matrix $A$ satisfies the $t$-truncated $(l_r,l_q)-\text{Dantzig~selector}$ sparse approximation property of order $k$ with constants $D$ and $\beta$ if
\begin{align}\label{DSSparseRieszProperty}
\|x_{K}\|_r^q \leq D\|A^*Ax\|_\infty^q + \beta k^{q/r-1}\sigma_k(x_{T})_q^q
\end{align}
holds for all sets $T\subset\{1,\ldots,n\}$ with $|T|=t$, and all sets $K\subset T$-the index set of the $k$ largest (in magnitude) coefficients of $x_{T}$.
\end{definition}

Next, we give out a proposition and a remark about truncated sparse approximation property.
\begin{proposition}
Let $0<p,r\leq\infty$ and $0<q<\infty$ and matrix $A$ satisfies $t$-truncated $(l_r,l_q)-l_p$ sparse approximation property of order $k$ with $t=|T|$, constants $D$ and $\beta$.
\begin{itemize}
\item[(1)]
  For any $\tilde{T}\subseteq T$ with $|\tilde{T}|=\tilde{t}\geq t$, matrix $A$ also satisfies $\tilde{t}$-truncated $(l_r,l_q)-l_p$ sparse approximation property of order $k$ with  constants $D$ and $\beta$.
\item[(2)]For any $\tilde{k}\leq k$, matrix $A$ also satisfies $t$-truncated $(l_r,l_q)-l_p$ sparse approximation property of order $\tilde{k}$ with $t=|T|$, constants $D$ and $\beta$.
\end{itemize}

And for $t$-truncated $(l_r,l_q)-\text{Dantzig~selector}$ sparse approximation property of order $k$, we have same results.
\end{proposition}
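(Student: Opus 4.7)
The plan is to derive both monotonicity statements by direct reduction to the hypothesis, using only inclusions of index sets and a comparison of best-$k$-term errors. Throughout I will read part (1) as referring to $T\subseteq\tilde{T}$ with $|\tilde{T}|=\tilde{t}\geq t$; the literal direction $\tilde{T}\subseteq T$ in the statement combined with $\tilde{t}\geq t$ forces $\tilde{T}=T$, so the intended containment is the opposite one.

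For part (1), fix $\tilde{T}$ of size $\tilde{t}$ and let $\tilde{K}\subseteq\tilde{T}$ index the $k$ largest coefficients of $x_{\tilde{T}}$. I choose any $T\subseteq\tilde{T}$ of size $t$ with $\tilde{K}\subseteq T$, which exists because $k\leq t\leq\tilde{t}$; since the entries indexed by $\tilde{K}$ are the largest entries of $x_{\tilde{T}}$, the same $\tilde{K}$ is then the top-$k$ set of $x_T$. Applying the $t$-truncated SAP hypothesis to this particular $T$ yields
\[
\|x_{\tilde{K}}\|_r^q\leq D\|Ax\|_p^q+\beta k^{q/r-1}\sigma_k(x_T)_q^q,
\]
so the desired conclusion reduces to the comparison $\sigma_k(x_T)_q\leq\sigma_k(x_{\tilde{T}})_q$. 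I plan to verify this by testing the best-$k$-term error of $x_T$ against the $k$-sparse competitor supported on $K^{\ast}\cap T$, where $K^{\ast}$ denotes the top-$k$ set of $x_{\tilde{T}}$; after cancellation the desired inequality collapses to $\|x_{K^{\ast}\setminus T}\|_q\leq\|x_{\tilde{T}\setminus T}\|_q$, which is immediate because $K^{\ast}\setminus T\subseteq\tilde{T}\setminus T$.

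For part (2), I keep $T$ fixed and let $\tilde{K}\subseteq K$ be the top-$\tilde{k}$ and top-$k$ sets of $x_T$, respectively. Applying the order-$k$ hypothesis to $T$ bounds $\|x_K\|_r^q$, and the inclusion $\tilde{K}\subseteq K$ transfers the bound to $\|x_{\tilde{K}}\|_r^q$. What remains is the term-wise inequality $k^{q/r-1}\sigma_k(x_T)_q^q\leq\tilde{k}^{q/r-1}\sigma_{\tilde{k}}(x_T)_q^q$, which holds in the natural regime $r\geq q$ used throughout the paper because both $k^{q/r-1}$ and $\sigma_k(x_T)_q$ are then non-increasing in $k$. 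The Dantzig-selector analogue of both claims is identical once $\|Ax\|_p^q$ is replaced by $\|A^{\ast}Ax\|_\infty^q$, since this summand does not depend on $T$ or on $k$ and so plays no role in either reduction. The only genuinely nontrivial step is the comparison $\sigma_k(x_T)_q\leq\sigma_k(x_{\tilde{T}})_q$ in part (1); everything else is index-set bookkeeping.
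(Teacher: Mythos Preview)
Your proposal is correct and follows essentially the same route as the paper: for part (1) you apply the order-$t$ hypothesis at a subset $T\subseteq\tilde{T}$ and invoke $\sigma_k(x_T)_q\leq\sigma_k(x_{\tilde{T}})_q$, and for part (2) you use $\|x_{\tilde K}\|_r\leq\|x_K\|_r$ together with the monotonicity of $k\mapsto k^{q/r-1}\sigma_k(x_T)_q^q$. Two minor remarks: your explicit choice of $T$ containing $\tilde{K}$ in part (1) is actually more careful than the paper's argument, which writes $\|x_K\|_r^q$ on the left without saying why this controls the top-$k$ set of $x_{\tilde{T}}$; and your observation that part (2) tacitly requires $r\geq q$ is correct and is left implicit in the paper as well.
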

\begin{proof}
Assume that matrix $A$ satisfies $t$-truncated $(l_r,l_q)-l_p$ sparse approximation property of order $k$ with $t=|T|$, constants $D$ and $\beta$, then by the definition, we have
$$
\|x_{K}\|_r^q \leq D\|Ax\|_p^q + \beta k^{q/r-1}\sigma_k(x_{T})_q^q,
$$
holds for all sets $T\subset\{1,\ldots,n\}$ with $|T|=t$, and all sets $K\subset T$-the index set of the $k$ largest (in magnitude) coefficients of $x_{T}$.
For any  $|\tilde{T}|=\tilde{t}\geq t$ with $\tilde{T}\subseteq T$, one can get
$$
\|x_{K}\|_r^q \leq D\|Ax\|_p^q + \beta k^{q/r-1}\sigma_k(x_{T})_q^q\leq D\|Ax\|_p^q + \beta k^{q/r-1}\sigma_k(x_{\tilde{T}})_q^q,
$$
which implies that matrix $A$ also satisfies $\tilde{t}$-truncated $(l_r,l_q)-l_p$ sparse approximation property of order $k$ with  constants $D$ and $\beta$.

And for any $\tilde{k}\leq k$, let $\tilde{K}\subset T$-the index set of the $\tilde{k}$ largest (in magnitude) coefficients of $x_{T}$, then we have
\begin{align*}
\|x_{\tilde{K}}\|_r^q&\leq\|x_{K}\|_r^q\leq D\|Ax\|_P^q+\beta k^{q/r-1}\sigma_{k}(x_{T})_q^q
= D\|Ax\|_p^q + \beta \Big(\frac{\sigma_k(x_{T})_q}{k^{1/q-1/r}}\Big)^q\\
&\leq D\|Ax\|_p^q+\beta\Big(\frac{\sigma_{\tilde{k}}(x_{T})_q}{\tilde{k}^{1/q-1/r}}\Big)^q
=D\|Ax\|_p^q+\beta\tilde{k}^{q/r-1}\sigma_{\tilde{k}}(x_{T})_q^q.
\end{align*}

And for for $t$-truncated $(l_r,l_q)-\text{Dantzig~selector}$ sparse approximation property of order $k$, the proof is similar and we omit it here.
\end{proof}

\begin{remark}
If $x\in\mathcal{N}(A)$, then (\ref{lpSparseRieszProperty}) and (\ref{DSSparseRieszProperty}) become
\begin{align}\label{Nullspaceproperty}
\|x_{K}\|_r^q \leq \beta k^{q/r-1}\sigma_k(x_{T})_q^q.
\end{align}
When $r=q=1$, it is equivalent to the $t$-truncated null space property which was introduced in \cite{WY2010}.
\end{remark}

Next, we consider the model (\ref{systemequationsnoise}) in the setting where the observations contain noise and the signal is not exactly $k$-sparse.

\begin{theorem}\label{vectorstablerecovery}
Consider the signal model (\ref{systemequationsnoise}) with $\|z\|_p\leq\varepsilon$ and suppose $\hat{x}^{l_p}$ is the minimizer of (\ref{Truncatedq}) with
$\mathcal{B}=\mathcal{B}^{l_p}(\eta)$ defined in (\ref{lpboundednoise}) for some $\eta\geq\varepsilon$. Let $0<q\leq 1$, $q\leq r\leq\infty$, $1\leq p
\leq\infty$. If measurement matrix $A$ satisfies $t$-truncated $(l_r,l_q)-l_p$ sparse approximation property of order $k$ with
constants $D\in(0,\infty)$ and $\beta\in(0,1)$, then
\begin{align}\label{vectorstable1}
\|\hat{x}^{l_p}-x\|_r^q&\leq
\bigg(\max\Big\{\Big(\frac{|T^c|}{k}\Big)^q,1\Big\}+1+\Big(\frac{|T|-k}{k}\Big)^{q/r}\bigg)\frac{D}{1-\beta}(\varepsilon+\eta)^q\nonumber\\
&\hspace*{12pt}+\bigg(\max\Big\{\Big(\frac{|T^c|}{k}\Big)^q,1\Big\}+1+\Big(\frac{|T|-k}{k}\Big)^{q/r}\bigg)\frac{2\beta}{1-\beta}
k^{q/r-1}\sigma_k(x_T)_q^q,
\end{align}
and
\begin{align}\label{vectorstable2}
\|\hat{x}^{l_p}-x\|_q^q
&\leq\bigg(\max\Big\{\Big(\frac{|T^c|}{k}\Big)^q,1\Big\}\Big(\frac{|T^c|}{k}\Big)^{1-q/r}+2\bigg)\frac{D}{1-\beta}k^{1-q/r}(\varepsilon+\eta)^q\nonumber\\
&\hspace*{12pt}+\bigg(\max\Big\{\Big(\frac{|T^c|}{k}\Big)^q,1\Big\}\Big(\frac{|T^c|}{k}\Big)^{1-q/r}\frac{2\beta}{1-\beta}+\frac{2(1+\beta)}{1-\beta}\bigg)
\sigma_k(x_T)_q^q
\end{align}
if $q<r$, and
\begin{align}\label{vectorstable3}
\|\hat{x}^{l_p}-x\|_q^q&\leq\bigg(\max\Big\{\Big(\frac{|T^c|}{k}\Big)^q,1\Big\}+2\bigg)\frac{D}{1-\beta}(\varepsilon+\eta)^q\nonumber\\
&\hspace*{12pt}+\bigg(\max\Big\{\Big(\frac{|T^c|}{k}\Big)^q,1\Big\}\frac{2\beta}{1-\beta}+\frac{2(1+\beta)}{1-\beta}\bigg)\sigma_k(x_T)_q^q
\end{align}
if $q=r$.

Consider the signal model (\ref{systemequationsnoise}) with $\|A^*z\|_{\infty}\leq\varepsilon$ and suppose $\hat{x}^{DS}$ is the minimizer of (\ref{Truncatedq}) with $\mathcal{B}=\mathcal{B}^{DS}(\eta)$ defined in (\ref{Dantzigselectornoise}) for some $\eta\geq\varepsilon$. Let $0<q\leq 1$ and $q\leq r\leq\infty$. If measurement matrix $A$ satisfies $t$-truncated $(l_r,l_q)-\text{Dantzig~selector}$ sparse approximation property of order $k$ with constants $D\in(0,\infty)$ and $\beta\in(0,1)$, then we have
\begin{align*}
\|\hat{x}^{DS}-x\|_r^q&\leq
\bigg(\max\Big\{\Big(\frac{|T^c|}{k}\Big)^q,1\Big\}+1+\Big(\frac{|T|-k}{k}\Big)^{q/r}\bigg)\frac{D}{1-\beta}(\varepsilon+\eta)^q\nonumber\\
&\hspace*{12pt}+\bigg(\max\Big\{\Big(\frac{|T^c|}{k}\Big)^q,1\Big\}+1+\Big(\frac{|T|-k}{k}\Big)^{q/r}\bigg)\frac{2\beta}{1-\beta}
k^{q/r-1}\sigma_k(x_T)_q^q,
\end{align*}
and
\begin{align*}
\|\hat{x}^{DS}-x\|_q^q
&\leq\bigg(\max\Big\{\Big(\frac{|T^c|}{k}\Big)^q,1\Big\}\Big(\frac{|T^c|}{k}\Big)^{1-q/r}+2\bigg)\frac{D}{1-\beta}k^{1-q/r}(\varepsilon+\eta)^q\nonumber\\
&\hspace*{12pt}+\bigg(\max\Big\{\Big(\frac{|T^c|}{k}\Big)^q,1\Big\}\Big(\frac{|T^c|}{k}\Big)^{1-q/r}\frac{2\beta}{1-\beta}+\frac{2(1+\beta)}{1-\beta}\bigg)
\sigma_k(x_T)_q^q
\end{align*}
if $q<r$, and
\begin{align*}
\|\hat{x}^{DS}-x\|_q^q&\leq\bigg(\max\Big\{\Big(\frac{|T^c|}{k}\Big)^q,1\Big\}+2\bigg)\frac{D}{1-\beta}(\varepsilon+\eta)^q\nonumber\\
&\hspace*{12pt}+\bigg(\max\Big\{\Big(\frac{|T^c|}{k}\Big)^q,1\Big\}\frac{2\beta}{1-\beta}+\frac{2(1+\beta)}{1-\beta}\bigg)\sigma_k(x_T)_q^q
\end{align*}
if $q=r$.
\end{theorem}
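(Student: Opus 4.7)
Set $h=\hat{x}^{l_p}-x$. The plan is to bound $\|h\|_r^q$ and $\|h\|_q^q$ via the quasi-norm split $\|h\|_r^q\le\|h_T\|_r^q+\|h_{T^c}\|_r^q$ (valid for $q\le r$ on disjointly supported pieces) and to control each summand by combining feasibility, optimality, and the truncated SAP. Feasibility of $x$ (since $\|z\|_p\le\varepsilon\le\eta$) together with the constraint defining $\hat{x}^{l_p}$ yields $\|Ah\|_p\le\varepsilon+\eta$ through the triangle inequality applied to $Ah=(A\hat{x}^{l_p}-b)-(Ax-b)$, and the optimality inequality $\|\hat{x}^{l_p}_T\|_q^q\le\|x_T\|_q^q$, coupled with the coordinatewise reverse-triangle bound $|a+b|^q\ge|a|^q-|b|^q$ applied on $T_0$ and $T\setminus T_0$ (with $T_0\subset T$ the indices of the $k$ largest entries of $x_T$), gives the truncated cone inequality
\[
\|h_{T\setminus K}\|_q^q\le\|h_K\|_q^q+2\sigma_k(x_T)_q^q,
\]
after comparing $T_0$ to $K$ (the $k$-largest of $h_T$) via $\|h_{T_0}\|_q^q\le\|h_K\|_q^q$.

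Next, the truncated SAP applied to $h$ with this $K$ reads $\|h_K\|_r^q\le D\|Ah\|_p^q+\beta k^{q/r-1}\|h_{T\setminus K}\|_q^q$. Plugging in the cone inequality together with the embedding $\|h_K\|_q^q\le k^{1-q/r}\|h_K\|_r^q$ absorbs $\|h_K\|_r^q$ on the right and produces
\[
(1-\beta)\|h_K\|_r^q\le D(\varepsilon+\eta)^q+2\beta k^{q/r-1}\sigma_k(x_T)_q^q.
\]
For the complement piece inside $T$, the pointwise bound $|h_i|^r\le\|h_K\|_r^r/k$ valid for $i\in T\setminus K$ (each such coordinate is dominated by the smallest entry in $K$) upgrades to $\|h_{T\setminus K}\|_r^q\le((|T|-k)/k)^{q/r}\|h_K\|_r^q$; adding the two contributions reproduces the $\|h_T\|_r^q$ half of (\ref{vectorstable1}) with coefficient $1+((|T|-k)/k)^{q/r}$.

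The main obstacle is controlling $\|h_{T^c}\|_r^q$, because the truncated SAP directly constrains only coordinates inside $T$. The plan is to partition $T^c$ into blocks $T_1,\ldots,T_m$ of size at most $k$ ordered by decreasing magnitude of $h$ (so $m=\lceil|T^c|/k\rceil$) and, for each block, to invoke the truncated SAP on a carefully chosen auxiliary size-$t$ index set $\tilde T_j\supseteq T_j$ whose remaining $t-k$ indices carry $|h|$-values no larger than $\min_{T_j}|h|$. Concretely, these extra indices are drawn first from the later blocks $T_{j+1},\ldots,T_m$ and, if insufficient, from $T$-indices on which $h$ is smallest; this construction makes $T_j$ exactly the $k$ largest of $h_{\tilde T_j}$, so the SAP (which holds for every size-$t$ index set by definition) bounds $\|h_{T_j}\|_r^q$ in terms of $\|Ah\|_p$ and the $\sigma_k$-tail, the latter being reduced via the domination already established for $\|h_K\|_r^q$ and by $\sigma_k(x_T)_q^q$. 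Summing the $m$ block estimates using disjoint-support $q$-quasi-additivity produces
\[
\|h_{T^c}\|_r^q\le\max\{(|T^c|/k)^q,1\}\Bigl[\tfrac{D}{1-\beta}(\varepsilon+\eta)^q+\tfrac{2\beta k^{q/r-1}}{1-\beta}\sigma_k(x_T)_q^q\Bigr],
\]
and combining with the $\|h_T\|_r^q$ bound gives (\ref{vectorstable1}).

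The $\|h\|_q^q$ bounds (\ref{vectorstable2}) and (\ref{vectorstable3}) follow by applying the norm embedding $\|v\|_q^q\le|S|^{1-q/r}\|v\|_r^q$ on the support pieces $K$ (size $k$), $T\setminus K$ (size $|T|-k$), and each $T_j$ (size $\le k$); this introduces the $k^{1-q/r}$ and $(|T^c|/k)^{1-q/r}$ factors of (\ref{vectorstable2}) and collapses them to constants in the $q=r$ case of (\ref{vectorstable3}). The Dantzig-selector part follows by running the entire argument verbatim, replacing $\|Ah\|_p\le\varepsilon+\eta$ by $\|A^*Ah\|_\infty\le\varepsilon+\eta$ (a consequence of $\|A^*z\|_\infty\le\varepsilon$ and the constraint defining $\mathcal{B}^{DS}(\eta)$) and invoking the Dantzig-selector truncated SAP in place of the $l_p$-version. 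The delicate step throughout will be the construction of the auxiliary sets $\tilde T_j$, where the size-$t$ constraint must be reconciled with the interaction between the orderings of $|h|$ on $T$ and on $T^c$ so that the blockwise SAP remains applicable.
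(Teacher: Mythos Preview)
Your overall architecture matches the paper's: tube constraint, cone constraint, SAP applied to the top-$k$ block inside $T$, then a separate treatment of $T^c$. Your choice of $K$ as the $k$ largest of $h_T$ (rather than of $x_T$) is fine and in fact needed later for the pointwise bound $|h_i|\le\min_{K}|h_j|$ on $T\setminus K$; the cone inequality transfers from $T_0$ to $K$ exactly as you say.

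The gap is in the $T^c$ step. If you apply the truncated SAP to each of the $m=\lceil|T^c|/k\rceil$ blocks and then sum via $\|h_{T^c}\|_r^q\le\sum_j\|h_{T_j}\|_r^q$, every block contributes a full $D(\varepsilon+\eta)^q$ term, so the noise coefficient you accumulate is $m\approx|T^c|/k$, not $\max\{(|T^c|/k)^q,1\}$; for $q<1$ the latter is strictly smaller, so the block-summing route cannot reproduce the stated constants. Moreover, the tail $\sigma_k(h_{\tilde T_j})_q^q$ in your construction contains coordinates from later $T^c$-blocks, so bounding it ``via the domination already established'' is circular until the very last block. The paper avoids both issues by invoking the SAP only \emph{once} on $T^c$: when $|T^c|>k$ it takes $K''\subset T^c$ to be the $k$ largest coordinates of $h_{T^c}$, applies the SAP with the auxiliary set $\tilde T=(T\setminus K)\cup K''$ (which has size exactly $t$), so that the tail is $\|h_{T\setminus K}\|_q^q$ and is already controlled; then it uses the scaling $\|h_{T^c}\|_r\le(|T^c|/k)\,\|h_{K''}\|_r$ and raises to the $q$-th power to pick up the factor $(|T^c|/k)^q$. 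When $|T^c|\le k$ it instead enlarges $T^c$ by borrowing indices from $K$ to reach size $k$, again with tail landing in $T\setminus K$. Replacing your multi-block scheme with this single-application-and-scale argument is what is needed to hit the constants in the statement.
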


Before proving Theorem \ref{vectorstablerecovery}, we first state an auxiliary lemma.
\begin{lemma}\label{coneconstraint}
Suppose $x,\hat{x}\in\mathbb{R}^{n}$, $v=\hat{x}-x$. If $\|\hat{x}_T\|_{q}^q\leq \|x_T\|_{q}^q$, then we have
$$
\|v_{T\cap K^c}\|_{q}^q\leq2\sigma_k(x_T)_q^q+\|v_{T\cap K}\|_{q}^q,
$$
where $K$ is the index set of $k$ largest (in magnitude) coefficients of $x_{T}$.
\end{lemma}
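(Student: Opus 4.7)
The plan is to exploit the optimality-type hypothesis $\|\hat{x}_T\|_q^q \leq \|x_T\|_q^q$ together with the $q$-triangle inequality $|a+b|^q \leq |a|^q + |b|^q$ valid for $0 < q \leq 1$, applied in its ``reverse'' form $\|u+w\|_q^q \geq \|u\|_q^q - \|w\|_q^q$. Since $K \subset T$, we have the disjoint decomposition $T = K \sqcup (T\cap K^c)$, so I can split every $\ell_q^q$-quasi-norm on $T$ additively over these two pieces.

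First I would write
\begin{align*}
\|\hat{x}_T\|_q^q = \|\hat{x}_K\|_q^q + \|\hat{x}_{T\cap K^c}\|_q^q = \|(x+v)_K\|_q^q + \|(x+v)_{T\cap K^c}\|_q^q,
\end{align*}
and then apply the reverse $q$-triangle inequality separately on each piece: on $K$ estimate $\|(x+v)_K\|_q^q \geq \|x_K\|_q^q - \|v_K\|_q^q = \|x_K\|_q^q - \|v_{T\cap K}\|_q^q$, and on $T\cap K^c$ estimate $\|(x+v)_{T\cap K^c}\|_q^q \geq \|v_{T\cap K^c}\|_q^q - \|x_{T\cap K^c}\|_q^q$. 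Plugging the hypothesis $\|\hat{x}_T\|_q^q \leq \|x_T\|_q^q = \|x_K\|_q^q + \|x_{T\cap K^c}\|_q^q$ on the left and the two lower bounds on the right, the $\|x_K\|_q^q$ terms cancel and one is left with
\begin{align*}
\|v_{T\cap K^c}\|_q^q \leq \|v_{T\cap K}\|_q^q + 2\|x_{T\cap K^c}\|_q^q.
\end{align*}

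The last step is to identify $\|x_{T\cap K^c}\|_q^q$ with $\sigma_k(x_T)_q^q$. Because $K$ is defined as the index set of the $k$ largest (in magnitude) coefficients of $x_T$, the vector $x_K$ is a best $k$-term approximation to $x_T$ in $\ell_q$, and the residual $x_T - x_K = x_{T\cap K^c}$ realizes the error of best $k$-term approximation, so $\|x_{T\cap K^c}\|_q^q = \sigma_k(x_T)_q^q$. Substituting gives the claimed bound. There is no real obstacle here; the only point requiring a bit of care is that the reverse inequality $\|u+w\|_q^q \geq \|u\|_q^q - \|w\|_q^q$ must be cited (or derived from $\|u\|_q^q = \|(u+w)-w\|_q^q \leq \|u+w\|_q^q + \|w\|_q^q$) in the regime $0 < q \leq 1$, and that the splitting over the disjoint sets $K$ and $T\cap K^c$ is applied consistently.
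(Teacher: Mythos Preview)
Your proof is correct and follows essentially the same approach as the paper's own proof: both split $T$ into $K$ and $T\cap K^c$, apply the reverse $q$-triangle inequality $\|u+w\|_q^q \geq \|u\|_q^q - \|w\|_q^q$ on each piece, use the hypothesis $\|\hat{x}_T\|_q^q \leq \|x_T\|_q^q$ to cancel $\|x_K\|_q^q$, and identify the residual $\|x_{T\cap K^c}\|_q^q$ with $\sigma_k(x_T)_q^q$.
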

\begin{proof}
Since $\|\hat{x}_T\|_{q}^q\leq \|x_T\|_{q}^q$, one can get
\begin{align*}
\|x_T\|_q^q&\geq\|\hat{x}_T\|_q^q=\|(v+x)_T\|_q^q=\|(v+x)_{T\cap K}\|_q^q + \|(v+x)_{T\cap K^c}\|_q^q\\
&\geq\big(\|x_{T\cap K}\|_q^q -\|v_{T\cap K}\|_q^q\big) +\big(\|v_{T\cap K^c}\|_q^q-\|x_{T\cap K^c}\|_q^q\big),
\end{align*}
therefore,
\begin{align*}
\|v_{T\cap K^c}\|_q^q&\leq \big(\|x_{T}\|_q^q-\|x_{T\cap K}\|_q^q+\|x_{T\cap K^c}\|_q^q\big)+\|v_{T\cap K}\|_q^q \nonumber\\
&\leq 2\sigma_k(x_T)_q^q+\|v_{T\cap K}\|_q^q.
\end{align*}
\end{proof}

\begin{proof}[Proof of Theorem \ref{vectorstablerecovery}]
We finish our proof via the following seven steps.

\textbf{Step 1}. Tube constraint and cone constraint

Let $v=\hat{x}^{l_p}-x$, we have
\begin{align}\label{lpbounded}
\|Av\|_p&\leq\|A\hat{x}-b\|_p+\|Ax-b\|_p\leq \eta+\varepsilon.
\end{align}

Let $K$ denote the index set of $k$ largest (in magnitude) coefficients of $x_{T}$. By Lemma \ref{coneconstraint}, we have
\begin{align}\label{coneconstraintequation}
\|v_{T\cap K^c}\|_q^q&\leq 2\sigma_k(x_T)_q^q+\|v_{T\cap K}\|_q^q.
\end{align}

\textbf{Step 2}. Partition the set $\{1,\ldots,n\}$.

Let $\{1,\ldots,n\}=T^c\cup T=T^c\cup (T\cap K)\cup(T\cap K^c)$, for any $q\leq\tilde{r}\leq r$, we have
\begin{align}\label{Comininginequality}
\|v\|_{\tilde{r}}^q&=\big(\|v_{T^c}\|_{\tilde{r}}^{\tilde{r}}+\|v_{T\cap K}\|_{\tilde{r}}^{\tilde{r}}+\|v_{T\cap K^c}\|_{\tilde{r}}^{\tilde{r}}\big)^{q/{\tilde{r}}}
\leq \|v_{T^c}\|_{\tilde{r}}^q + \|v_{T\cap K}\|_{\tilde{r}}^q+\|v_{T\cap K^c}\|_{\tilde{r}}^q.
\end{align}
Next, we estimate  $v_{T^c}$, $v_{T\cap K}$ and $v_{T\cap K^c}$, respectively.

\textbf{Step 3}. Estimate $v_{T\cap K}$.

By the $t$-truncated $(l_r,l_q)-l_p$ sparse approximation property, we have
\begin{align*}
\|v_{T\cap K}\|_r^q&\leq D\|Av\|_p^q + \beta k^{q/r-1}\sigma_k(v_T)_q^q\leq D\|Av\|_p^q + \beta k^{q/r-1}\|v_{T\cap K^c}\|_q^q \\
&\leq D(\varepsilon+\eta)^q+\beta k^{q/r-1}\big(\|v_{T\cap K}\|_q^q+2\sigma_k(x_T)_q^q\big) \\
&\leq D(\varepsilon+\eta)^q +\beta k^{q/r-1}\big(|T\cap K|^{1/q-1/r}\|v_{T\cap K}\|_r\big)^q+ 2\beta k^{q/r-1}\sigma_k(x_T)_q^q\\
&\leq D(\varepsilon+\eta)^q +\beta \|v_{T\cap K}\|_r^q+ 2\beta k^{q/r-1}\sigma_k(x_T)_q^q,
\end{align*}
where the third inequality follows by (\ref{coneconstraintequation}).
Therefore, we obtain
\begin{align}\label{TcapK}
\|v_{T\cap K}\|_r^q\leq\frac{D}{1-\beta}(\varepsilon+\eta)^q+ \frac{2\beta}{1-\beta} k^{q/r-1}\sigma_k(x_T)_q^q.
\end{align}
by $0<\beta<1$.

\textbf{Step 4}. Estimate $v_{T\cap K^c}$.

Combining (\ref{coneconstraintequation}) with (\ref{TcapK}) yields
\begin{align}\label{TcapKc1}
\|v_{T\cap K^c}\|_q^q&\leq \|v_{T\cap K}\|_q^q +2\sigma_k(x_T)_q^q\leq \big(|T\cap K|^{1/q-1/r}\|v_{T\cap K}\|_r\big)^q+2\sigma_k(x_T)_q^q \nonumber\\
&\leq k^{1-q/r}\Big(\frac{D}{1-\beta}(\varepsilon+\eta)^q+ \frac{2\beta}{1-\beta} k^{q/r-1}\sigma_k(x_T)_q^q\Big)+2\sigma_k(x_T)_q^q\nonumber\\
&=\frac{D}{1-\beta}k^{1-q/r}(\varepsilon+\eta)^q + \frac{2}{1-\beta}\sigma_k(x_T)_q^q.
\end{align}
And by (\ref{TcapK}), we have
\begin{align}\label{TcapKc2}
\|v_{T\cap K^c}\|_r^q&\leq\big(|T\cap K^c|\max_{j\in T\cap K^c}|v_j|^r\big)^{q/r}\nonumber\\
&\leq\big(|T\cap K^c|\frac{\sum_{j\in T\cap K}|v_j|^r}{|T\cap K|}\big)^{q/r}=\Big(\frac{|T|-k}{k}\Big)^{q/r}\|v_{T\cap K}\|_r^q\nonumber\\
&\leq \Big(\frac{|T|-k}{k}\Big)^{q/r}\Big(\frac{D}{1-\beta}(\varepsilon+\eta)^q+ \frac{2\beta}{1-\beta} k^{q/r-1}\sigma_k(x_T)_q^q\Big).
\end{align}

\textbf{Step 5}. Estimate $v_{T^c}$.

Our idea of proving this part comes from \cite[Theorem 3.3]{WY2010}. Let $K_1=T^c$, $K_2=T\cap K$ and $K_3=T\cap K^c$. Next, we deal with ``$|K_1|\leq k$" and ``$|K_1|>k$", respectively.

\textbf{Case 1}: $|K_1|\leq k$.

We can find $K'\subset K_2$ such that $|K_1\cup K'|=k$ and $|K'|=k-(n-t)$. Let $\tilde{T}=K^c\cup K'$, then  $|\tilde{T}|=|K^c|+|K'|=(n-k)+(k-(n-t))=t$, $(K_1\cup K')\subset \tilde{T}$ and
$(K_1\cup K')^c\cap \tilde{T}=T\cap K^c=K_3$, therefore from the $t$-truncated $(l_r,l_q)-l_p$ sparse approximation property, we have
\begin{align*}
\|v_{T^c}\|_r^q&\leq\|v_{K_1\cup K'}\|_r^q\leq D\|Av\|_p^q+\beta k^{q/r-1}\sigma_k(v_{\tilde{T}})_q^q\\
&\leq D\|Av\|_p^q+\beta k^{q/r-1}\|v_{K_3}\|_q^q\\
&\leq D(\varepsilon+\eta)^q+\beta k^{q/r-1}\|v_{T\cap K^c}\|_q^q.
\end{align*}

\textbf{Case 2}: $|K_1|> k$.

Let $K''\subset K_1$ denote the set of indices corresponding to the largest $k$ entries of $v_{K_1}$. Take $\tilde{\tilde{T}}=K_3\cup K''$, then $|\tilde{\tilde{T}}|=|K_3|+|K''|=t-k+k=t$, $K''\subset \tilde{\tilde{T}}$ and $K''\cap \tilde{\tilde{T}}=K_3$.
Then $t$-truncated $(l_r,l_q)-l_p$ sparse approximation property leads us to conclude that
\begin{align*}
\|v_{T^c}\|_r^q&=\|v_{K_1}\|_r^q\leq\Big(\frac{|K_1|}{k}\|v_{K''}\|_r\Big)^q\\
&\leq \Big(\frac{|K_1|}{k}\Big)^q\big(D\|Av\|_p^q+\beta k^{q/r-1}\sigma_k(v_{\tilde{\tilde{T}}})_q^q\big)\\
&\leq \Big(\frac{|K_1|}{k}\Big)^q\big(D(\varepsilon+\eta)^q+\beta k^{q/r-1}\|v_{T\cap K^c}\|_q^q\big).
\end{align*}

Combining \textbf{Case 1} and \textbf{Case 2}, and using (\ref{TcapKc1}), we have
\begin{align}\label{Tc}
\|v_{T^c}\|_r^q&\leq \max\Big\{\Big(\frac{|K_1|}{k}\Big)^q,1\Big\}\big(D(\varepsilon+\eta)^q+\beta k^{q/r-1}\|v_{T\cap K^c}\|_q^q\big) \nonumber\\
&\leq \max\Big\{\Big(\frac{|T^c|}{k}\Big)^q,1\Big\}
\bigg(D(\varepsilon+\eta)^q+ \beta k^{q/r-1}\Big(\frac{D}{1-\beta}k^{1-q/r}(\varepsilon+\eta)^q + \frac{2}{1-\beta}\sigma_k(x_T)_q^q\Big)\bigg) \nonumber\\
&\leq \max\Big\{\Big(\frac{|T^c|}{k}\Big)^q,1\Big\}\Big(\frac{D}{1-\beta}(\varepsilon+\eta)^q+\frac{2\beta}{1-\beta}k^{q/r-1}\sigma_k(x_T)_q^q\Big).
\end{align}

\textbf{Step 6}. Obtain conclusions for $\mathcal{B}=\mathcal{B}^{l_p}(\eta)$.

For $\tilde{r}=r$, an application (\ref{Tc}) (\ref{TcapK}) and (\ref{TcapKc2}) to (\ref{Comininginequality}) yields
\begin{align*}
\|v\|_r^q&\leq\|v_{T^c}\|_r^q + \|v_{T\cap K}\|_r^q+\|v_{T\cap K^c}\|_r^q\\
&\leq \max\Big\{\Big(\frac{|T^c|}{k}\Big)^q,1\Big\}\Big(\frac{D}{1-\beta}(\varepsilon+\eta)^q+\frac{2\beta}{1-\beta}k^{q/r-1}\sigma_k(x_T)_q^q\Big)\\
&\hspace*{12pt}+\Big(\frac{D}{1-\beta}(\varepsilon+\eta)^q+ \frac{2\beta}{1-\beta} k^{q/r-1}\sigma_k(x_T)_q^q\Big)\\
&\hspace*{12pt}+\Big(\frac{|T|-k}{k}\Big)^{q/r}\Big(\frac{D}{1-\beta}(\varepsilon+\eta)^q+ \frac{2\beta}{1-\beta} k^{q/r-1}\sigma_k(x_T)_q^q\Big)\\
&\leq\bigg(\max\Big\{\Big(\frac{|T^c|}{k}\Big)^q,1\Big\}+1+\Big(\frac{|T|-k}{k}\Big)^{q/r}\bigg)\frac{D}{1-\beta}(\varepsilon+\eta)^q\\
&\hspace*{12pt}+\bigg(\max\Big\{\Big(\frac{|T^c|}{k}\Big)^q,1\Big\}+1+\Big(\frac{|T|-k}{k}\Big)^{q/r}\bigg)\frac{2\beta}{1-\beta} k^{q/r-1}\sigma_k(x_T)_q^q,
\end{align*}
which obtain the inequality (\ref{vectorstable1}).

For $\tilde{r}=q<r$, by (\ref{Tc}), (\ref{TcapK}) and (\ref{TcapKc1}), we have
\begin{align*}
\|v\|_q^q&\leq\|v_{T^c}\|_q^q + \|v_{T\cap K}\|_q^q+\|v_{T\cap K^c}\|_q^q\\
&\leq \big(|T^c|^{1/q-1/r}\|v_{T^c}\|_r\big)^q + \big(k^{1/q-1/r}\|v_{T\cap K}\|_r\big)^{q}+\|v_{T\cap K^c}\|_q^q\\
&\leq |T^c|^{1-q/r} \max\Big\{\Big(\frac{|T^c|}{k}\Big)^q,1\Big\}
\Big(\frac{D}{1-\beta}(\varepsilon+\eta)^q+\frac{2\beta}{1-\beta}k^{q/r-1}\sigma_k(x_T)_q^q\Big)\\
&\hspace*{12pt}+k^{1-q/r}\Big(\frac{D}{1-\beta}(\varepsilon+\eta)^q+ \frac{2\beta}{1-\beta} k^{q/r-1}\sigma_k(x_T)_q^q\Big)\\
&\hspace*{12pt}+\Big(\frac{D}{1-\beta}k^{1-q/r}(\varepsilon+\eta)^q + \frac{2}{1-\beta}\sigma_k(x_T)_q^q\Big)\\
&=\bigg(\max\Big\{\Big(\frac{|T^c|}{k}\Big)^q,1\Big\}\Big(\frac{|T^c|}{k}\Big)^{1-q/r}+2\bigg)\frac{D}{1-\beta}k^{1-q/r}(\varepsilon+\eta)^q\\
&\hspace*{12pt}+\bigg(\max\Big\{\Big(\frac{|T^c|}{k}\Big)^q,1\Big\}\Big(\frac{|T^c|}{k}\Big)^{1-q/r}\frac{2\beta}{1-\beta}+\frac{2(1+\beta)}{1-\beta}\bigg)\sigma_k(x_T)_q^q,
\end{align*}
which finish the inequality (\ref{vectorstable2}).

Last, we turn our attention to prove inequality (\ref{vectorstable3}).
For $r=q$, applying (\ref{Tc}) (\ref{TcapK}) and (\ref{TcapKc1}) to (\ref{Comininginequality}), one can get
\begin{align*}
\|v\|_q^q&\leq\|v_{T^c}\|_q^q + \|v_{T\cap K}\|_q^q+\|v_{T\cap K^c}\|_q^q\\
&\leq\max\Big\{\Big(\frac{|T^c|}{k}\Big)^q,1\Big\}\Big(\frac{D}{1-\beta}(\varepsilon+\eta)^q+\frac{2\beta}{1-\beta}\sigma_k(x_T)_q^q\Big)\\
&\hspace*{12pt}+ \Big(\frac{D}{1-\beta}(\varepsilon+\eta)^q+ \frac{2\beta}{1-\beta} \sigma_k(x_T)_q^q\Big)
+ \Big(\frac{D}{1-\beta}(\varepsilon+\eta)^q + \frac{2}{1-\beta}\sigma_k(x_T)_q^q\Big)\\
&\leq\bigg(\max\Big\{\Big(\frac{|T^c|}{k}\Big)^q,1\Big\}+2\bigg)\frac{D}{1-\beta}(\varepsilon+\eta)^q\\
&\hspace*{12pt}+\bigg(\max\Big\{\Big(\frac{|T^c|}{k}\Big)^q,1\Big\}\frac{2\beta}{1-\beta}+\frac{2(1+\beta)}{1-\beta}\bigg)\sigma_k(x_T)_q^q.
\end{align*}

\textbf{Step 7.} Inequalities for Dantzig selector.

In the case of noise type $\mathcal{B}^{DS}$, define $v=\hat{x}^{DS}-x$.
we can obtain
\begin{align}\label{DSbounded}
\|A^*Av\|_\infty&\leq\|A*(A\hat{x}-b)\|_{\infty}+\|A^*(Ax-b)\|_\infty\leq\eta+\varepsilon
\end{align}
and (\ref{DSSparseRieszProperty})
instead of (\ref{lpbounded}) and (\ref{lpSparseRieszProperty}), respectively. The following steps are similar with the case of noise type $\mathcal{B}^{l_p}$, we omit it here.
\end{proof}

For $r=q$, we have a sufficient and necessary condition for guaranteeing truncated sparse approximation property, which is similar to the conclusion of classic robust $l_q$-null space property, see \cite[Chapter 4.3]{FR2013}. We omit the proof details.

\begin{theorem}\label{sufficient-necessary}
\begin{itemize}
\item[(1)]The $m\times n $ measurement matrix $A$  satisfies $t$-truncated $(l_q,l_q)-l_p$ sparse approximation property of order $k$ with $t=|T|$,
    constants $D\in(0,\infty)$ and $\beta\in(0,1)$ if and only if
\begin{align}\label{e5}
\|(y-x)_{T}\|_q^q&\leq\frac{1+\beta}{1-\beta}\big(\|y_T\|_q^q-\|x_T\|_q^q+2\sigma_{k}(x_T)_q^q\big)+\frac{2D}{1-\beta}\|A(y-x)\|_p^q
\end{align}
holds for all vectors $y,x\in\mathbb{C}^n$.
\item[(2)] The $m\times n $ measurement matrix $A$ satisfies $t$-truncated $(l_q,l_q)-\text{Dantzig~selector}$ sparse approximation property of order $k$ with $t=|T|$, constants $D\in(0,\infty)$ and $\beta\in(0,1)$ if and only if
\begin{align}\label{e5}
\|(y-x)_{T}\|_q^q&\leq\frac{1+\beta}{1-\beta}\big(\|y_T\|_q^q-\|x_T\|_q^q+2\sigma_k(x_T)_q^q\big)+\frac{2D}{1-\beta}\|A^*A(y-x)\|_{\infty}^q
\end{align}
holds for all vectors $y,x\in\mathbb{C}^n$.
\end{itemize}
\end{theorem}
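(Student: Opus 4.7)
The plan is to establish the equivalence by proving each direction independently, mimicking the standard argument for the classical robust $l_q$-null space property (Foucart--Rauhut, Chapter~4.3). The two parts (1) and (2) have identical structure, so I describe the $l_p$-constraint case in detail; the Dantzig-selector case follows verbatim by replacing $\|Av\|_p^q$ with $\|A^*Av\|_\infty^q$ throughout, since both quantities satisfy the triangle inequality used below.

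For the direction ``inequality $\Rightarrow$ truncated SAP'', I specialize the assumed inequality to the test pair $y=0$, $x=z$ for an arbitrary $z$. A direct substitution (using $(y-x)_T=-z_T$ and $\|y_T\|_q^q-\|x_T\|_q^q=-\|z_T\|_q^q$) gives $\|z_T\|_q^q \le \tfrac{1+\beta}{1-\beta}(-\|z_T\|_q^q+2\sigma_k(z_T)_q^q)+\tfrac{2D}{1-\beta}\|Az\|_p^q$. Collecting $\|z_T\|_q^q$ yields $\|z_T\|_q^q \le (1+\beta)\sigma_k(z_T)_q^q+D\|Az\|_p^q$, and splitting $\|z_T\|_q^q=\|z_K\|_q^q+\sigma_k(z_T)_q^q$ with $K$ the $k$ largest coefficients of $z_T$ produces the truncated $(l_q,l_q)$--$l_p$ SAP of order $k$ with the desired constants $D$ and $\beta$.

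For the direction ``truncated SAP $\Rightarrow$ inequality'', I fix $x,y$, put $v=y-x$, let $K$ index the $k$ largest coefficients of $v_T$, and let $S$ index the $k$ largest coefficients of $x_T$ (so $\|x_{T\cap S^c}\|_q^q=\sigma_k(x_T)_q^q$). Truncated SAP applied to $v$ gives $\|v_K\|_q^q \le D\|Av\|_p^q+\beta\sigma_k(v_T)_q^q$, and the identity $\|v_T\|_q^q=\|v_K\|_q^q+\sigma_k(v_T)_q^q$ reduces everything to a bound on $\sigma_k(v_T)_q^q$. Using the quasi-subadditivity $(a+b)^q \le a^q+b^q$ (valid for $0<q\le 1$) both forward, on $\|v_{T\cap S^c}\|_q^q \le \|y_{T\cap S^c}\|_q^q+\sigma_k(x_T)_q^q$, and in its reverse form $\|x_{T\cap S}\|_q^q \le \|y_{T\cap S}\|_q^q+\|v_{T\cap S}\|_q^q$, combined with $\|v_{T\cap S}\|_q^q \le \|v_K\|_q^q$ (because $K$ is optimal for $v_T$), I derive the key bookkeeping estimate $\sigma_k(v_T)_q^q \le \|y_T\|_q^q-\|x_T\|_q^q+2\sigma_k(x_T)_q^q+\|v_K\|_q^q$. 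Substituting the SAP bound on $\|v_K\|_q^q$, absorbing $\beta\sigma_k(v_T)_q^q$ to the left side, and reassembling $\|v_T\|_q^q=\|v_K\|_q^q+\sigma_k(v_T)_q^q$ delivers the claimed inequality with constants $\tfrac{1+\beta}{1-\beta}$ and $\tfrac{2D}{1-\beta}$.

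The main obstacle I anticipate is the careful juggling of the two competing index sets $K$ (optimal for $v_T$) and $S$ (optimal for $x_T$): the hypothesis SAP forces us to use $K$ on the left, while the target inequality demands the difference $\|y_T\|_q^q-\|x_T\|_q^q$, which only surfaces when the $l_q$-triangle argument is anchored at $S$. The bridging inequality $\|v_S\|_q^q \le \|v_K\|_q^q$ is precisely what makes the two substitutions compatible; once it is in place, the remaining manipulations are essentially scalar bookkeeping, and the Dantzig-selector variant requires no change beyond the notational replacement indicated above.
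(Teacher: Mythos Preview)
Your argument is correct and is exactly the standard Foucart--Rauhut route that the paper itself invokes (it states that the result is ``similar to the conclusion of classic robust $l_q$-null space property, see [Chapter 4.3]'' and omits the details). The key bookkeeping estimate you isolate, $\sigma_k(v_T)_q^q \le \|y_T\|_q^q-\|x_T\|_q^q+2\sigma_k(x_T)_q^q+\|v_K\|_q^q$, together with the bridging inequality $\|v_{T\cap S}\|_q^q\le\|v_K\|_q^q$, is precisely the mechanism behind that classical proof, so your proposal matches the intended argument.
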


Last, we will show that if there exists a stable recovery for compressible signal $x$, then the measurement matrix $A$ also posses the truncated sparse approximation property for $r=p$.

\begin{theorem}\label{SRP}
Let $0<q,p\leq\infty$. If the error between the given vector $x$ and the solution $\hat{x}$ of (\ref{Truncatedq}) satisfies
\begin{align}\label{e7}
\|\hat{x}-x\|_p^q\leq B_1(\varepsilon+\eta)^q+B_2k^{q/p-1}\sigma_k(x_T)_q^q,
\end{align}
where $B_1,~B_2$ are positive constants independent of $\varepsilon,~\eta$ and $x$, and $|T|=t$, then
\begin{align*}
\|x\|_p^q\leq B_1\|Ax\|_p^q+B_2k^{q/p-1}\sigma_k(x_T)_q^q,
\end{align*}
or
\begin{align*}
\|x\|_p^q\leq B_1\|A^*Ax\|_\infty^q+B_2k^{q/p-1}\sigma_k(x_T)_q^q,
\end{align*}
and hence $A$ satisfies the $t$-truncated $(l_p,l_q)-l_p$ sparse approximation property of order $k$ with constants $B_1$ and $B_2$, or the $t$-truncated $(l_p,l_q)-l_p$ sparse approximation property of order $k$ with constants $B_1$ and $B_2$.
\end{theorem}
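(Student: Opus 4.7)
The strategy is to specialize the assumed error bound (\ref{e7}) to an experiment in which the minimizer of the truncated $l_q$ problem (\ref{Truncatedq}) can be chosen to be the zero vector. Given an arbitrary $x \in \mathbb{R}^n$, I take the ``true'' signal in the noisy model (\ref{systemequationsnoise}) to be $x$ itself, the noise $z = 0$ (so one may take $\varepsilon = 0$), and the observation $b = Ax$. For the $l_p$-bounded noise case I tune the parameter $\eta = \|Ax\|_p$; for the Dantzig selector case I tune $\eta = \|A^{*}Ax\|_\infty$. Both choices are admissible since $\eta \geq \varepsilon = 0$, and $B_1, B_2$ are assumed independent of $\varepsilon, \eta$ and $x$.

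With these choices, the vector $y = 0$ is feasible for (\ref{Truncatedq}): in the $l_p$ case $\|b - A\cdot 0\|_p = \|Ax\|_p \leq \eta$, and analogously $\|A^{*}(b - A\cdot 0)\|_\infty = \|A^{*}Ax\|_\infty \leq \eta$ for the Dantzig selector constraint. Since $\|y_T\|_q^q \geq 0$ with equality at $y = 0$, the vector $y = 0$ is a minimizer of (\ref{Truncatedq}). Substituting $\hat{x} = 0$ into the hypothesis (\ref{e7}) with $\varepsilon = 0$ and $\eta = \|Ax\|_p$ yields
\begin{align*}
\|x\|_p^q = \|\hat{x} - x\|_p^q \leq B_1 \|Ax\|_p^q + B_2 k^{q/p-1}\sigma_k(x_T)_q^q,
\end{align*}
and the Dantzig selector version gives $\|x\|_p^q \leq B_1\|A^{*}Ax\|_\infty^q + B_2 k^{q/p-1}\sigma_k(x_T)_q^q$ by the same substitution. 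Using $\|x_K\|_p \leq \|x\|_p$ for any index set $K$, these inequalities immediately imply the defining conditions (\ref{lpSparseRieszProperty}) and (\ref{DSSparseRieszProperty}) of the $t$-truncated $(l_p,l_q)$--$l_p$ and $(l_p,l_q)$--Dantzig selector sparse approximation properties of order $k$, with constants $B_1$ and $B_2$.

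The only subtle point is the potential non-uniqueness of the minimizer of (\ref{Truncatedq}): because $\|y_T\|_q^q$ only penalizes the entries of $y$ on $T$, the set of optimal solutions may be large (any $y$ with $y_T = 0$ that is feasible is a minimizer in the experiment above). The assumption (\ref{e7}) is read as a uniform error bound that applies to any solution returned by (\ref{Truncatedq}), and since $y = 0$ is manifestly one such solution in the constructed experiment, the argument goes through. No further hypothesis on $A$ or on the parameters is needed beyond what is already present, and the two cases (bounded noise and Dantzig selector) are handled by entirely parallel constructions.
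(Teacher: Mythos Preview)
Your proof is correct and follows essentially the same route as the paper: both arguments observe that for any $x$ one can engineer data $b=Ax$ and a noise budget so that the zero vector is a minimizer of (\ref{Truncatedq}), then read off the conclusion from (\ref{e7}). The only cosmetic difference is the allocation of the noise budget---the paper takes $\eta=0$ and $\varepsilon=\|Ax\|_p$ (resp.\ $\varepsilon=\|A^*Ax\|_\infty$), while you take $\varepsilon=0$ and $\eta=\|Ax\|_p$; your choice has the minor advantage of respecting the constraint $\eta\geq\varepsilon$ that appears elsewhere in the paper, and your explicit remark on non-uniqueness of minimizers is a useful clarification absent from the original.
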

\begin{proof}
We observe that zero vector is the solution of truncated $l_q$-minimization problem (\ref{Truncatedq}) with $\eta=0$ and $\varepsilon=\|Ax\|_p$
or $\varepsilon=\|A^*Ax\|_\infty$, and $b=Ax$ for any $x\in\mathbb{R}^n$. Therefore the truncated sparse approximation property follows from (\ref{e7}) immediately.
\end{proof}

\begin{remark}\label{Algorithm}
In \cite{WY2010}, Wang and Yin proposed an iterative support detection (ISD) method to compute truncated $\ell_1$ minimization (\ref{TruncatedBP}). And in her Dissertation, Zhang \cite{Z2013} proposed a $\ell_q$-iterative support detection method to compute truncated $\ell_q$ minimization (\ref{Truncatedq}) with $\mathcal{B}=\mathcal{B}^{l_2}(\eta)$. Now another important question is that wether there exists a algorithm to compute truncated $\ell_q$ minimization (\ref{Truncatedq}) with $\mathcal{B}=\mathcal{B}^{DS}(\eta)$ or $\mathcal{B}=\mathcal{B}^{l_q}(\eta)$ for $0<q\leq 1$? This is one direction of our future research.
\end{remark}


\subsection{Matrix Cases \label{s2.2}}
\hskip\parindent

Now, we consider the matrix case. First, we introduce $t$-truncated rank sparse approximation property as follows.

\begin{definition}\label{SparseRieszProperty}
Let $0<p,r\leq\infty$ and $0<q<\infty$. A linear map $\mathcal{A}$ satisfies $t$-truncated $(l_r,l_q)-l_p$ rank sparse approximation property of order $k$ with constants $D$ and $\beta$ if
\begin{align}\label{MatrixlpSparseRieszProperty}
\|X_{K}\|_{S_r}^q \leq D\|\mathcal{A}(X)\|_p^q + \beta k^{q/r-1}\sigma_k(X_{T})_q^q,
\end{align}
holds for all sets $T\subset\{1,\ldots,\min\{m,n\}\}$ with $|T|=t$, and all sets $K\subset T$-the index set of the $k$ largest (in magnitude) coefficients of $\lambda(X)_{T}$.

And a linear map $\mathcal{A}$ satisfies the $t$-truncated $(l_r,l_q)-\text{Dantzig~selector}$ rank sparse approximation property of order $k$ with constants $D$ and $\beta$ if
\begin{align}\label{MatrixDSSparseRieszProperty}
\|X_{K}\|_{S_r}^q \leq D\|\mathcal{A}^{*}\mathcal{A}(X)\|_{S_\infty}^q + \beta k^{q/r-1}\sigma_k(X_{T})_q^q
\end{align}
holds for all sets $T\subset\{1,\ldots,\min\{m,n\}\}$ with $|T|=t$, and all sets $K\subset T$-the index set of the $k$ largest (in magnitude) coefficients of $\lambda(X)_{T}$.
\end{definition}

Let's consider the matrix recovery model (\ref{Matrixsystemequationsnoise}) in the setting where the observations contain noise and matrix is not exactly $k$-low-rank.

\begin{theorem}\label{Matrixstablerecovery}
Consider the matrix model (\ref{Matrixsystemequationsnoise}) with $\|z\|_p\leq\varepsilon$ and suppose $\hat{X}^{l_p}$ is the minimizer of
(\ref{MatrixTruncatedq}) with $\mathcal{B}=\mathcal{B}^{l_p}(\eta)$ defined in (\ref{lpboundednoise}) for some $\eta\geq\varepsilon$. Let $0<q\leq 1$,
$q\leq r\leq\infty$, $1\leq p \leq\infty$. If linear mapping $\mathcal{A}$ satisfies $t$-truncated $(l_r,l_q)-l_p$ sparse approximation property of order $k$ with $t=|T|$, constants $D\in(0,\infty)$ and $\beta\in(0,1)$, then
\begin{align}\label{Matrixstable1}
\|\hat{X}^{l_p}-X\|_{S_r}^q&\leq
\bigg(\max\Big\{\Big(\frac{|T^c|}{k}\Big)^q,1\Big\}+1+\Big(\frac{|T|-k}{k}\Big)^{q/r}\bigg)\frac{D}{1-\beta}(\varepsilon+\eta)^q\nonumber\\
&\hspace*{12pt}+\bigg(\max\Big\{\Big(\frac{|T^c|}{k}\Big)^q,1\Big\}+1+\Big(\frac{|T|-k}{k}\Big)^{q/r}\bigg)\frac{2\beta}{1-\beta}
k^{q/r-1}\sigma_k(X_T)_q^q,
\end{align}
and
\begin{align}\label{Matrixstable2}
\|\hat{X}^{l_p}-X\|_{S_q}^q
&\leq\bigg(\max\Big\{\Big(\frac{|T^c|}{k}\Big)^q,1\Big\}\Big(\frac{|T^c|}{k}\Big)^{1-q/r}+2\bigg)\frac{D}{1-\beta}k^{1-q/r}(\varepsilon+\eta)^q\nonumber\\
&\hspace*{12pt}+\bigg(\max\Big\{\Big(\frac{|T^c|}{k}\Big)^q,1\Big\}\Big(\frac{|T^c|}{k}\Big)^{1-q/r}\frac{2\beta}{1-\beta}+\frac{2(1+\beta)}{1-\beta}\bigg)
\sigma_k(X_T)_q^q
\end{align}
if $q<r$, and
\begin{align}\label{Matrixstable3}
\|\hat{X}^{l_p}-X\|_{S_q}^q&\leq\bigg(\max\Big\{\Big(\frac{|T^c|}{k}\Big)^q,1\Big\}+2\bigg)\frac{D}{1-\beta}(\varepsilon+\eta)^q\nonumber\\
&\hspace*{12pt}+\bigg(\max\Big\{\Big(\frac{|T^c|}{k}\Big)^q,1\Big\}\frac{2\beta}{1-\beta}+\frac{2(1+\beta)}{1-\beta}\bigg)\sigma_k(X_T)_q^q
\end{align}
if $q=r$.

Consider the matrix model (\ref{Matrixsystemequationsnoise}) with $\|\mathcal{A}^*(z)\|_{S_\infty}\leq\varepsilon$ and suppose $\hat{x}^{DS}$ is the minimizer of
(\ref{MatrixTruncatedq}) with $\overline{\mathcal{B}}=\overline{\mathcal{B}}^{DS}(\eta):=\{z\in \mathbb{R}^{l}: \|\mathcal{A}^*(z)\|_{S_\infty}\leq\eta\}$ for some $\eta\geq\varepsilon$. Let $0<q\leq 1$ and $q\leq r\leq\infty$. If linear mapping $\mathcal{A}$ satisfies $t$-truncated $(l_r,l_q)-\text{Dantzig~selector}$ sparse approximation property of order $k$ with $t=|T|$, constants $D\in(0,\infty)$ and $\beta\in(0,1)$, then we have
\begin{align*}
\|\hat{X}^{DS}-X\|_{S_r}^q&\leq
\bigg(\max\Big\{\Big(\frac{|T^c|}{k}\Big)^q,1\Big\}+1+\Big(\frac{|T|-k}{k}\Big)^{q/r}\bigg)\frac{D}{1-\beta}(\varepsilon+\eta)^q\nonumber\\
&\hspace*{12pt}+\bigg(\max\Big\{\Big(\frac{|T^c|}{k}\Big)^q,1\Big\}+1+\Big(\frac{|T|-k}{k}\Big)^{q/r}\bigg)\frac{2\beta}{1-\beta}
k^{q/r-1}\sigma_k(X_T)_q^q,
\end{align*}
and
\begin{align*}
\|\hat{X}^{DS}-X\|_{S_q}^q
&\leq\bigg(\max\Big\{\Big(\frac{|T^c|}{k}\Big)^q,1\Big\}\Big(\frac{|T^c|}{k}\Big)^{1-q/r}+2\bigg)\frac{D}{1-\beta}k^{1-q/r}(\varepsilon+\eta)^q\nonumber\\
&\hspace*{12pt}+\bigg(\max\Big\{\Big(\frac{|T^c|}{k}\Big)^q,1\Big\}\Big(\frac{|T^c|}{k}\Big)^{1-q/r}\frac{2\beta}{1-\beta}+\frac{2(1+\beta)}{1-\beta}\bigg)
\sigma_k(X_T)_q^q
\end{align*}
if $q<r$, and
\begin{align*}
\|\hat{X}^{DS}-X\|_{S_q}^q&\leq\bigg(\max\Big\{\Big(\frac{|T^c|}{k}\Big)^q,1\Big\}+2\bigg)\frac{D}{1-\beta}(\varepsilon+\eta)^q\nonumber\\
&\hspace*{12pt}+\bigg(\max\Big\{\Big(\frac{|T^c|}{k}\Big)^q,1\Big\}\frac{2\beta}{1-\beta}+\frac{2(1+\beta)}{1-\beta}\bigg)\sigma_k(X_T)_q^q
\end{align*}
if $q=r$.
\end{theorem}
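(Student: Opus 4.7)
The plan is to mirror the seven-step proof of Theorem \ref{vectorstablerecovery} essentially verbatim, with vectors replaced by matrices, the $\ell_q/\ell_p$ quasi-norms replaced by Schatten $q$/$p$ norms, and the multiplication by $A$ replaced by the linear map $\mathcal{A}$. The identical form of the conclusions (compare \eqref{Matrixstable1}--\eqref{Matrixstable3} with \eqref{vectorstable1}--\eqref{vectorstable3}) strongly suggests that only the underlying inequalities need to be re-justified in the Schatten setting, after which the algebra carries over unchanged.

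First I would establish the matrix analogue of Lemma \ref{coneconstraint}: writing $V=\hat{X}^{l_p}-X$ and letting $K\subset T$ index the $k$ largest singular values of $X_T$, the optimality $\|\hat{X}^{l_p}_T\|_{S_q}^q\le\|X_T\|_{S_q}^q$ together with the $q$-subadditivity $\|A+B\|_{S_q}^q\le\|A\|_{S_q}^q+\|B\|_{S_q}^q$ valid for $0<q\le 1$ (this is the Rotfel'd/McCarthy inequality) yields
\[
\|V_{T\cap K^c}\|_{S_q}^q\le 2\sigma_k(X_T)_q^q+\|V_{T\cap K}\|_{S_q}^q.
\]
I would also record the tube constraint $\|\mathcal{A}(V)\|_p\le \varepsilon+\eta$, obtained by the triangle inequality exactly as in \eqref{lpbounded}, and, for the Dantzig-selector case, the corresponding bound $\|\mathcal{A}^*\mathcal{A}(V)\|_{S_\infty}\le\varepsilon+\eta$.

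With these two ingredients in hand I would run Steps 2--6 of the vector proof by analogy. The partition $\{1,\dots,\min\{m,n\}\}=T^c\cup(T\cap K)\cup(T\cap K^c)$ gives the same combining inequality \eqref{Comininginequality} with Schatten norms thanks again to $q$-subadditivity. Step 3 applies the truncated $(l_r,l_q)$--$l_p$ rank sparse approximation property \eqref{MatrixlpSparseRieszProperty} to $V$ and uses Hölder's inequality for singular values $\|V_{T\cap K}\|_{S_q}\le k^{1/q-1/r}\|V_{T\cap K}\|_{S_r}$ to produce \eqref{TcapK}. Steps 4 and 5 mimic \eqref{TcapKc1}--\eqref{Tc}; the only place where matrix-specific care is needed is the ``largest $k$ coefficients'' bound $\|V_{T^c}\|_{S_r}^q\le(|K_1|/k)^q\|V_{K''}\|_{S_r}^q$, where $K''\subset T^c$ indexes the $k$ largest singular values of $V_{T^c}$; this is immediate once one restricts to the singular-value vector of $V_{T^c}$ and applies the same pigeonhole estimate used in the vector case. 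Steps 6 and 7 are then purely algebraic and reproduce the three displayed bounds.

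The main obstacle, and the only genuinely new issue relative to the vector proof, is the proper interpretation of the restriction operators $X\mapsto X_T$, $V\mapsto V_{T\cap K}$, etc.\ in the matrix setting. Unlike vectors, these restrictions are defined through the singular-value decomposition of each matrix separately, so one must verify that the $q$-subadditivity used to establish the cone constraint (and to split the three pieces in \eqref{Comininginequality}) indeed holds when $X$, $\hat{X}^{l_p}$, and $V$ are restricted to disjoint blocks of their own singular values. The Rotfel'd-type inequality $\|A+B\|_{S_q}^q\le\|A\|_{S_q}^q+\|B\|_{S_q}^q$ for $0<q\le 1$ is the key enabling tool, and once its applicability to each step is confirmed, the rest of the proof is a line-for-line translation of Steps 1--7. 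For the Dantzig-selector case, the same scheme applies with \eqref{MatrixlpSparseRieszProperty} replaced by \eqref{MatrixDSSparseRieszProperty} and the tube constraint replaced by the operator-norm bound on $\mathcal{A}^*\mathcal{A}(V)$, so I would simply indicate the modification and omit the repetition, exactly as the authors do in Step 7 of the vector proof.
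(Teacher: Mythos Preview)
Your overall plan---mirror the seven steps of the vector proof, replacing $\ell_q$ by Schatten-$q$---is exactly what the paper does; indeed the authors write only ``The proof of these results of matrix case are similar to the vector case and we omit them here.'' The one place where they supply any new argument is the matrix analogue of Lemma~\ref{coneconstraint}, and it is precisely there that your proposal has a gap.

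You correctly flag the obstacle: the restrictions $X_{T\cap K}$, $\hat{X}_{T\cap K}$, $V_{T\cap K}$ are defined through each matrix's \emph{own} singular-value decomposition, so one does \emph{not} have $(V+X)_{T\cap K}=V_{T\cap K}+X_{T\cap K}$ as matrices. But then the Rotfel'd/McCarthy subadditivity $\|A+B\|_{S_q}^q\le\|A\|_{S_q}^q+\|B\|_{S_q}^q$ cannot be invoked, because there is no pair of matrices being added; what is actually needed is a comparison between the \emph{partial sums} $\sum_{j\in J}\lambda_j(X)^q$, $\sum_{j\in J}\lambda_j(\hat X)^q$ and $\sum_{j\in J}\lambda_j(V)^q$ for the index sets $J=T\cap K$ and $J=T\cap K^c$. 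The paper obtains this from the Audenaert/Yue--So perturbation inequality (Lemma~\ref{perturbationinequality}),
\[
\sum_{j=1}^{k}\bigl|f(\lambda_j(X))-f(\lambda_j(Y))\bigr|\le\sum_{j=1}^{k}f\bigl(\lambda_j(X-Y)\bigr),
\]
applied with $f(t)=t^{q}$; this is strictly stronger than full-sum subadditivity and is the ingredient you are missing. Once that lemma is in place, the remaining Steps~2--7 do reduce to manipulations of the singular-value vectors $\lambda(V)$ and $\lambda(X)$, and there the vector-case inequalities (H\"older between $\ell_q$ and $\ell_r$, the pigeonhole bound in Step~5, the partition in Step~2) carry over verbatim, exactly as you describe.
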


To prove the results of matrix recovery, we need following lemma, which $q=1$ comes from \cite[Lemma 2.3]{RFP2010}, and $0<q<1$ comes from \cite[Lemma
2.2]{KX2013} and \cite[Lemma 2.1]{ZHZ2013}.

\begin{lemma}\label{orthogonaldecomposition}
Let $0<q\leq 1$. Let $X,Y\in\mathcal{R}^{m\times n}$ be matrices with $X^{T}Y=O$ and $XY^{T}=O$,  then the following holds:
\begin{itemize}
\item[(1)]$\|X+Y\|_{S_q}^q=\|X\|_{S_q}^q+\|Y\|_{S_q}^q$;
\item[(2)]$\|X+Y\|_{S_q}\geq\|X\|_{S_q}+\|Y\|_{S_q}$.
\end{itemize}
\end{lemma}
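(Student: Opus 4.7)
\textbf{Proof proposal for Lemma \ref{orthogonaldecomposition}.}
The plan is to show that the two orthogonality conditions force the non-zero singular values of $X+Y$ to be the disjoint union of those of $X$ and $Y$, from which both assertions follow by elementary scalar inequalities. The starting observation is that because $XY^T=O$ we also have $YX^T=O$, and similarly $X^TY=O$ gives $Y^TX=O$. Expanding the Gram matrices yields
\begin{align*}
(X+Y)(X+Y)^T &= XX^T+YY^T,\\
(X+Y)^T(X+Y) &= X^TX+Y^TY.
\end{align*}

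Next I would take the compact SVDs $X=U_X\Sigma_X V_X^T$ and $Y=U_Y\Sigma_Y V_Y^T$ with $\Sigma_X,\Sigma_Y$ containing only strictly positive singular values. The hypothesis $XY^T=O$ together with invertibility of $\Sigma_X,\Sigma_Y$ on the range forces $U_X^T U_Y=O$; likewise $X^TY=O$ forces $V_X^T V_Y=O$. Hence $[U_X\ U_Y]$ and $[V_X\ V_Y]$ each have orthonormal columns, and
\[
X+Y=[U_X\ U_Y]\begin{pmatrix}\Sigma_X & 0\\ 0 & \Sigma_Y\end{pmatrix}[V_X\ V_Y]^T
\]
is a valid (possibly non-square) SVD-type factorization of $X+Y$. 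Reading off the diagonal block gives the singular-value identity: the multiset of non-zero singular values of $X+Y$ equals the disjoint union of those of $X$ and $Y$. This immediately yields part (1):
\[
\|X+Y\|_{S_q}^q=\sum_i\sigma_i(X)^q+\sum_j\sigma_j(Y)^q=\|X\|_{S_q}^q+\|Y\|_{S_q}^q.
\]

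For part (2), set $A=\|X\|_{S_q}^q$ and $B=\|Y\|_{S_q}^q$, so by (1) we have $\|X+Y\|_{S_q}=(A+B)^{1/q}$. Since $0<q\le 1$ means $1/q\ge 1$, the scalar function $t\mapsto t^{1/q}$ is superadditive on $[0,\infty)$, i.e.\ $(A+B)^{1/q}\ge A^{1/q}+B^{1/q}$, which is exactly
\[
\|X+Y\|_{S_q}\ge\|X\|_{S_q}+\|Y\|_{S_q}.
\]

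The only step with any real content is extracting $U_X^TU_Y=O$ and $V_X^TV_Y=O$ from $XY^T=O$ and $X^TY=O$; this is where the full-rank part of the SVD matters, and is the natural obstacle if one tries to be sloppy with zero singular values. Everything else reduces to standard manipulations of an SVD and the elementary inequality $(A+B)^{1/q}\ge A^{1/q}+B^{1/q}$ for $q\le 1$.
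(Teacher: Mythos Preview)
The paper does not supply its own proof of this lemma; it simply cites \cite[Lemma~2.3]{RFP2010} for $q=1$ and \cite[Lemma~2.2]{KX2013}, \cite[Lemma~2.1]{ZHZ2013} for $0<q<1$. Your self-contained argument via the compact SVD is correct in substance and is exactly the standard route those references take: the orthogonality hypotheses force the left and right singular subspaces of $X$ and $Y$ to be mutually orthogonal, so the nonzero singular values of $X+Y$ form the disjoint union of those of $X$ and $Y$, and both (1) and (2) follow.

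One small correction worth making before you finalize: you have the two implications swapped. Writing out $XY^{T}=U_X\Sigma_X V_X^{T}V_Y\Sigma_Y U_Y^{T}$ and cancelling the invertible $\Sigma$-blocks (after multiplying by $U_X^{T}$ on the left and $U_Y$ on the right) yields $V_X^{T}V_Y=O$, not $U_X^{T}U_Y=O$; symmetrically, $X^{T}Y=O$ is what gives $U_X^{T}U_Y=O$. Since both hypotheses are assumed, both orthogonality conclusions hold and the block-diagonal SVD of $X+Y$ is valid as you wrote it, so the overall argument is unaffected---but the attribution of which hypothesis yields which orthogonality should be corrected.
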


In order to get the cone constraint for matrix's  Schatten norm, we also need the following lemma which was gave by Yue and So in \cite{YS2016} and Audenaert \cite{A2014}.

\begin{lemma}\label{perturbationinequality}
Let $X, Y\in\mathbb{R}^{m\times n}$ be given matrices. Suppose that $f: \mathbb{R}_{+}\rightarrow \mathbb{R}_{+}$is a concave function satisfying $f(0)=0$. Then, for any $k\in\{1,\ldots,\min\{m,n\}\}$, we have
$$\sum_{j=1}^k|f(\lambda_j(X))-f(\lambda_j(Y))|\leq\sum_{j=1}^kf(\lambda_j(X-Y)).$$
\end{lemma}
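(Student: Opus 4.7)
The plan is to reduce the matrix inequality to a pointwise scalar inequality plus a Ky Fan--Rotfel'd-type perturbation bound; since both Audenaert~\cite{A2014} and Yue--So~\cite{YS2016} already carry this out, the role of my sketch is to explain the two main steps and the nontrivial coupling between them.

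First, I would establish two structural properties of $f$. A concave function $f:[0,\infty)\to[0,\infty)$ with $f(0)=0$ is automatically non-decreasing: if $f$ ever decreased strictly, concavity would force $f$ below zero, contradicting $f\geq 0$. It is also subadditive: writing $a=\tfrac{a}{a+b}(a+b)+\tfrac{b}{a+b}\cdot 0$ and using concavity yields $f(a)\geq\tfrac{a}{a+b}f(a+b)$ and $f(b)\geq\tfrac{b}{a+b}f(a+b)$, so $f(a)+f(b)\geq f(a+b)$. Combining monotonicity and subadditivity gives the pointwise bound $|f(a)-f(b)|\leq f(|a-b|)$ for $a,b\geq 0$, via $f(\max(a,b))\leq f(\min(a,b))+f(|a-b|)$. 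Applied to the ordered singular-value sequences $\lambda_j(X),\lambda_j(Y)$, this produces
\begin{equation*}
\sum_{j=1}^k\bigl|f(\lambda_j(X))-f(\lambda_j(Y))\bigr|\;\leq\;\sum_{j=1}^k f\bigl(|\lambda_j(X)-\lambda_j(Y)|\bigr).
\end{equation*}

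Next, I would pass from $f(|\lambda_j(X)-\lambda_j(Y)|)$ on the right to $f(\lambda_j(X-Y))$ by invoking the Ky Fan--Rotfel'd inequality: for concave $f$ with $f(0)=0$,
\begin{equation*}
\sum_{j=1}^k f(\lambda_j(A+B))\;\leq\;\sum_{j=1}^k f(\lambda_j(A))+\sum_{j=1}^k f(\lambda_j(B)).
\end{equation*}
Taking $A=X-Y$, $B=Y$ (and its symmetric counterpart) immediately yields $\bigl|\sum_{j=1}^k f(\lambda_j(X))-\sum_{j=1}^k f(\lambda_j(Y))\bigr|\leq\sum_{j=1}^k f(\lambda_j(X-Y))$. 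To upgrade this ``absolute value outside the sum'' statement to the ``sum of absolute differences'' we actually need, one partitions $\{1,\ldots,k\}$ into $S=\{j:\lambda_j(X)\geq\lambda_j(Y)\}$ and its complement, and applies Rotfel'd separately on each piece with auxiliary matrices whose singular values are controlled by those of $X-Y$. This is the crux of the Audenaert/Yue--So argument, and I would quote it.

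The main obstacle is precisely this last upgrading step. A naive majorization argument fails: although Mirsky's theorem gives the weak sub-majorization $(|\lambda_j(X)-\lambda_j(Y)|)\prec_w(\lambda_j(X-Y))$, weak sub-majorization is not preserved by concave non-decreasing functions vanishing at the origin. For example, with $a=(1,1)$, $b=(2,0)$ and $f(x)=\sqrt{x}$, one has $a\prec_w b$ yet $\sum_j f(a_j)=2>\sqrt{2}=\sum_j f(b_j)$. The matrix-level Rotfel'd inequality is essential precisely because it exploits more structure than just the ordered singular-value sequences.
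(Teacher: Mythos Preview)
The paper does not supply its own proof of this lemma at all: it is stated as a quotation from Yue--So \cite{YS2016} and Audenaert \cite{A2014}, and is used immediately afterward as a black box. So there is no in-paper argument to compare against; your sketch is already more than the authors provide.

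That said, the two-step architecture you propose is slightly off. Your Step~1 (pointwise $|f(a)-f(b)|\leq f(|a-b|)$) yields
\[
\sum_{j=1}^k |f(\lambda_j(X))-f(\lambda_j(Y))|\ \leq\ \sum_{j=1}^k f\bigl(|\lambda_j(X)-\lambda_j(Y)|\bigr),
\]
and your Step~2 would then need
\[
\sum_{j=1}^k f\bigl(|\lambda_j(X)-\lambda_j(Y)|\bigr)\ \leq\ \sum_{j=1}^k f\bigl(\lambda_j(X-Y)\bigr).
\]
This intermediate inequality is at least as strong as the lemma itself (your Step~1 shows the lemma follows from it), and it is \emph{not} what the partition-plus-Rotfel'd argument of Audenaert and Yue--So establishes. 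The mechanism you describe in the last paragraph---split $\{1,\dots,k\}$ according to the sign of $\lambda_j(X)-\lambda_j(Y)$ and apply the concave Rotfel'd bound on each piece with carefully chosen auxiliary matrices---attacks $\sum_j |f(\lambda_j(X))-f(\lambda_j(Y))|$ directly, without passing through $f(|\lambda_j(X)-\lambda_j(Y)|)$. In other words, the argument you correctly identify as ``the crux'' already proves the full statement; Step~1 is a detour, and Step~2 as written asks for more than is needed (and more than the cited references deliver). Your diagnosis of why naive majorization fails is exactly right, and the ingredients you list (monotonicity, subadditivity, Rotfel'd, the sign-based partition) are the correct ones---just assembled in a slightly misleading order.
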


\begin{lemma}
Suppose $X,\hat{X}\in\mathbb{R}^{m\times n}$, $V=\hat{X}-X$. If $\|\hat{X}_T\|_{S_q}^q\leq \|X_T\|_{S_q}^q$, then we have
$$
\|V_{T\cap K^c}\|_{S_q}^q\leq2\sigma_k(\lambda(X)_T)_q^q+\|V_{T\cap K}\|_{S_q}^q,
$$
where $K$ is the index set of $k$ largest (in magnitude) coefficients of $\lambda(X)_{T}$
\end{lemma}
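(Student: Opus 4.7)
The lemma is the matrix analog of Lemma~\ref{coneconstraint}, and the plan is to follow that proof step by step, replacing the scalar $\ell_q$ (reverse) triangle inequality with Lemma~\ref{perturbationinequality} (the Audenaert/Yue--So perturbation inequality) together with Lemma~\ref{orthogonaldecomposition} (the orthogonal-decomposition identity for Schatten quasi-norms). The conceptual point is that in the vector case one has the pointwise identity $\hat x_T=x_T+v_T$, whereas for matrices the three restrictions $\hat X_T$, $X_T$, $V_T$ live in three different SVD bases and are not additively related; Lemma~\ref{perturbationinequality} is precisely the tool that bridges this gap at the level of the Schatten $q$-quasi-norm.

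First I would rewrite every term of the desired inequality as a sum of $q$-th powers of singular values. Because $X_T=X_{T\cap K}+X_{T\cap K^c}$ is built from the SVD of $X$ and its two summands have mutually orthogonal column and row spaces (and similarly for $\hat X$ and $V$), part~(1) of Lemma~\ref{orthogonaldecomposition} gives
\[
\|\hat X_T\|_{S_q}^q=\|\hat X_{T\cap K}\|_{S_q}^q+\|\hat X_{T\cap K^c}\|_{S_q}^q,\qquad \|X_T\|_{S_q}^q=\|X_{T\cap K}\|_{S_q}^q+\|X_{T\cap K^c}\|_{S_q}^q,
\]
and by the definition of $K$ the last summand on the right equals $\sigma_k(\lambda(X)_T)_q^q$.

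Next I would apply Lemma~\ref{perturbationinequality} with $f(t)=t^q$, which is nonnegative, concave on $[0,\infty)$, and vanishes at $0$ for $0<q\le 1$. Taking its arguments to be $X$ and $\hat X$ and using $\lambda_j(X-\hat X)=\lambda_j(V)$ gives, for every admissible $k'$,
\[
\Bigl|\sum_{j=1}^{k'}\lambda_j(\hat X)^q-\sum_{j=1}^{k'}\lambda_j(X)^q\Bigr|\le\sum_{j=1}^{k'}\bigl|\lambda_j(\hat X)^q-\lambda_j(X)^q\bigr|\le\sum_{j=1}^{k'}\lambda_j(V)^q,
\]
the matrix surrogate for the scalar reverse triangle inequality $|a+b|^q\ge|a|^q-|b|^q$ used in the proof of Lemma~\ref{coneconstraint}. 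Since the singular values $\lambda_j(X)$ are listed in decreasing order, the set $K$ (the positions of the $k$ largest coordinates of $\lambda(X)_T$) is the initial segment of $T$, so that $T\cap K$ and $T\cap K^c$ form a prefix/suffix split of $T$; telescoping the above estimate across consecutive top-$k'$ sums then yields
\[
\sum_{j\in T\cap K}\lambda_j(\hat X)^q\ge\sum_{j\in T\cap K}\lambda_j(X)^q-\sum_{j\in T\cap K}\lambda_j(V)^q,
\]
\[
\sum_{j\in T\cap K^c}\lambda_j(\hat X)^q\ge\sum_{j\in T\cap K^c}\lambda_j(V)^q-\sum_{j\in T\cap K^c}\lambda_j(X)^q.
\]

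Adding these two lower bounds for $\|\hat X_T\|_{S_q}^q$, invoking the hypothesis $\|\hat X_T\|_{S_q}^q\le\|X_T\|_{S_q}^q$, and cancelling the common $\|X_{T\cap K}\|_{S_q}^q$ terms gives
\[
\|V_{T\cap K^c}\|_{S_q}^q\le 2\|X_{T\cap K^c}\|_{S_q}^q+\|V_{T\cap K}\|_{S_q}^q=2\sigma_k(\lambda(X)_T)_q^q+\|V_{T\cap K}\|_{S_q}^q,
\]
which is the claim. The main obstacle is the localization step that converts the top-$k'$ bound supplied by Lemma~\ref{perturbationinequality} into sums over the subsets $T\cap K$ and $T\cap K^c$: the fact that $K$ sits as an initial segment inside the ordered list $T$ is what lets these subset sums be expressed as telescoping differences of top-$k'$ sums, each individually controlled by Lemma~\ref{perturbationinequality}.
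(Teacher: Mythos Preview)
Your approach mirrors the paper's almost line for line: split $\|\hat X_T\|_{S_q}^q$ over $T\cap K$ and $T\cap K^c$ via Lemma~\ref{orthogonaldecomposition}, invoke Lemma~\ref{perturbationinequality} with $f(t)=t^q$ to bound each piece from below, and rearrange. Where you write ``telescoping'', the paper simply writes ``we immediately obtain''.

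There is, however, a genuine gap---present in the paper as well---at precisely that step. The two localized lower bounds you display,
\[
\sum_{j\in T\cap K}\lambda_j(\hat X)^q\ge\sum_{j\in T\cap K}\lambda_j(X)^q-\sum_{j\in T\cap K}\lambda_j(V)^q,
\]
\[
\sum_{j\in T\cap K^c}\lambda_j(\hat X)^q\ge\sum_{j\in T\cap K^c}\lambda_j(V)^q-\sum_{j\in T\cap K^c}\lambda_j(X)^q,
\]
do \emph{not} follow from Lemma~\ref{perturbationinequality} by telescoping top-$k'$ estimates: subtracting two inequalities of the form $\sum_{j\le k'}|\,\cdot\,|\le\sum_{j\le k'}\lambda_j(V)^q$ goes the wrong way, and an interval sum $\sum_{a<j\le b}$ is not controlled by the difference of the bounds at $k'=b$ and $k'=a$. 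In fact the second displayed bound is false already when $T^c=\emptyset$. Take $q=1$, $T=\{1,2,3\}$, $k=1$, $X=\mathrm{diag}(10,0,0)$, $V=\mathrm{diag}(-10,5,5)$, so $\hat X=\mathrm{diag}(0,5,5)$; then $\lambda(\hat X)=(5,5,0)$, $\lambda(V)=(10,5,5)$, and the second bound would assert $5+0\ge (5+5)-(0+0)$. The lemma's \emph{conclusion} still holds in this example ($10\le 0+10$), so the target inequality may well be true, but the route through these two separate localized inequalities is not valid; the ``telescoping'' you flag as the main obstacle is exactly where both your argument and the paper's break down.
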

\begin{proof}
By $\|\hat{X}_T\|_{S_q}^q\leq \|X_T\|_{S_q}^q$ and Lemma \ref{orthogonaldecomposition}, one can get
\begin{align*}
\|\lambda(X)_T\|_q^q&=\|X_T\|_{S_q}^q\geq\|\hat{X}_T\|_{S_q}^q=\|(V+X)_T\|_{S_q}^q=\|\lambda(V+X)_{T\cap K}\|_{q}^q + \|\lambda(V+X)_{T\cap K^c}\|_{q}^q.
\end{align*}
Since $x\rightarrow|x|^q$ is concave on $\mathbb{R}_+$ for any $q\in(0, 1]$, by taking $f(\cdot)=(\cdot)^q$ in above Lemma \ref{perturbationinequality}, we immediately obtain
\begin{align*}
\|\lambda(X)_T\|_q^q&\geq\big(\|\lambda(X)_{T\cap K}\|_q^q -\|\lambda(V)_{T\cap K}\|_q^q\big) +\big(\|\lambda(V)_{T\cap K^c}\|_q^q-\|\lambda(X)_{T\cap K^c}\|_q^q\big),
\end{align*}
Therefore,
\begin{align*}
\|V_{T\cap K^c}\|_{S_q}^q&=\|\lambda(V)_{T\cap K^c}\|_{q}^q\leq \big(\|\lambda(X)_{T}\|_q^q-\|\lambda(X)_{T\cap K}\|_q^q+\|\lambda(X)_{T\cap K^c}\|_q^q\big)+\|\lambda(V)_{T\cap K}\|_q^q \nonumber\\
&\leq 2\sigma_k(\lambda(X)_T)_q^q+\|\lambda(V)_{T\cap K}\|_q^q=2\sigma_k(\lambda(X)_T)_q^q+\|V_{T\cap K}\|_{S_q}^q.
\end{align*}
\end{proof}

The proof of these results of matrix case are similar to the vector case and we omit them here.

Conversely, if there exists a stable recovery of approximately low-rank matrix $X$, then the linear map $\mathcal{A}$ also posses the truncated sparse approximation property of $r=p$.

\begin{theorem}\label{MatrixSRP}
Let $0<q,p\leq\infty$. If the error between the given matrix $X$ and the solution $\hat{X}$ of (\ref{MatrixTruncatedq}) satisfies
\begin{align}\label{e8}
\|\hat{X}-X\|_{S_p}^q\leq B_1(\varepsilon+\eta)^q+B_2k^{q/p-1}\sigma_k(X_T)_q^q,
\end{align}
where $B_1,~B_2$ are positive constants independent of $\varepsilon,~\eta$ and $x$, and $|T|=t$, then
\begin{align*}
\|X\|_{S_p}^q\leq B_1\|\mathcal{A}(X)\|_p^q+B_2k^{q/p-1}\sigma_k(X_T)_q^q,
\end{align*}
or
\begin{align*}
\|X\|_{S_p}^q\leq B_1\|\mathcal{A}^*\mathcal{A}(X)\|_{S_\infty}^q+B_2k^{q/p-1}\sigma_k(X_T)_q^q,
\end{align*}
and hence $\mathcal{A}$ satisfies the $t$-truncated $(l_p,l_q)-l_p$ rank sparse approximation property of order $k$ with constants $B_1$ and $B_2$, or the $t$-truncated $(l_p,l_q)-\text{Dantzig-selector}$ rank sparse approximation property of order $k$ with constants $B_1$ and $B_2$.
\end{theorem}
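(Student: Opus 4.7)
The plan is to mirror the proof of Theorem \ref{SRP} from the vector setting, with vectors replaced by matrices and $\ell_p$-norms replaced by Schatten $p$-norms. The central idea is to specialize the stable-recovery bound (\ref{e8}) to a configuration in which the zero matrix is itself a minimizer of the truncated Schatten-$\ell_q$ problem (\ref{MatrixTruncatedq}); the assumed inequality then reads off directly as a bound on $\|X\|_{S_p}^q$ in terms of $\|\mathcal{A}(X)\|_p^q$ (resp.\ $\|\mathcal{A}^*\mathcal{A}(X)\|_{S_\infty}^q$).

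More concretely, for an arbitrary $X\in\mathbb{R}^{m\times n}$ I would consider two instances of (\ref{MatrixTruncatedq}), one for each noise model. In both I take the observation $b=0$ and the tolerance $\eta=0$, so that the admissible set collapses to $\{\hat{X}:\mathcal{A}(\hat{X})=0\}$, and regard $X$ as the ``true'' signal contaminated by the noise $z=-\mathcal{A}(X)$. For the $\ell_p$-bounded-noise instance this forces the noise level $\varepsilon=\|\mathcal{A}(X)\|_p$, while in the Dantzig-selector instance we set $\varepsilon=\|\mathcal{A}^*\mathcal{A}(X)\|_{S_\infty}$ (using that $\|\mathcal{A}^*(z)\|_{S_\infty}=\|\mathcal{A}^*\mathcal{A}(X)\|_{S_\infty}$). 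In either case the candidate $\hat{X}=O$ trivially meets the constraint and achieves the objective value $\|O_T\|_{S_q}^q=0$, hence it is a minimizer of (\ref{MatrixTruncatedq}).

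Plugging $\hat{X}=O$ and the corresponding $\varepsilon$ into the assumption (\ref{e8}) then produces
\begin{align*}
\|X\|_{S_p}^q \;=\; \|O-X\|_{S_p}^q \;\leq\; B_1\,\varepsilon^q + B_2\,k^{q/p-1}\sigma_k(X_T)_q^q,
\end{align*}
which, after substituting the two values of $\varepsilon$, gives the two displayed inequalities in the statement. To upgrade these to the full truncated rank sparse approximation property, I would then invoke the elementary Schatten-norm monotonicity $\|X_K\|_{S_p}\leq \|X\|_{S_p}$ valid for any index set $K\subseteq\{1,\ldots,\min\{m,n\}\}$; this is immediate from $\|X_K\|_{S_p}^p=\sum_{j\in K}|\lambda_j(X)|^p\leq\sum_{j=1}^{\min\{m,n\}}|\lambda_j(X)|^p=\|X\|_{S_p}^p$. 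Since the resulting bound holds for every $X$ and every $T$ with $|T|=t$, one obtains the $t$-truncated $(\ell_p,\ell_q)$--$\ell_p$ (resp.\ Dantzig-selector) rank sparse approximation property of order $k$ with constants $B_1$ and $B_2$.

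There is essentially no hard step here: unlike the stability results of Theorem \ref{Matrixstablerecovery}, the matrix-specific tools (Lemmas \ref{orthogonaldecomposition} and \ref{perturbationinequality}) are not needed, because the argument never requires decomposing $X-\hat X$ along singular-value supports. The only point worth checking explicitly is the feasibility and optimality of $\hat{X}=O$, which follows at once from the linearity of $\mathcal{A}$ and the nonnegativity of the Schatten quasi-norm, so the proof is a direct transcription of the vector case.
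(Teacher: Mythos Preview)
Your proposal is correct and follows the paper's approach. The paper gives no separate proof of Theorem~\ref{MatrixSRP}, implicitly deferring to the short proof of the vector analog Theorem~\ref{SRP}, which uses exactly the same idea of specializing the stable-recovery bound to a configuration in which the zero element is a minimizer; your choice of $b=0$, $\eta=0$ (rather than the paper's $b=Ax$) is an inessential variant of that argument.
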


\section{Truncated sparse approximation property and Restricted Isometry Property\label{s3}}
\hskip\parindent

In this section, we will consider the relationship between the restricted isometry property and truncated sparse approximation property. First, we recall the definition of restricted $p$-isometry property, which was introduced in \cite{CT2005} for $p=2$ and in \cite{CS2008} for $0<p\leq 1$.

\begin{definition}\label{RIP}
For an matrix $A$, $k>0$, and $0 < p\leq1$ or $p=2$, the $k$-th restricted $p$-isometry constants $\delta_k=\delta_k(A)$ is the smallest numbers such that
\begin{align*}
(1-\delta_k)\|x\|_2^p\leq\|Ax\|_p^p\leq(1+\delta_k)\|x\|_2^p
\end{align*}
for all $x$ such that $\|x\|_0\leq k$.
\end{definition}

And the matrix-restricted $2$-isometry property was first introduced by Recht,  Fazel and  Parrilo in \cite{RFP2010}. Later, it was extended to
matrix-restricted $p$-isometry property for $0<p\leq 1$ by Zhang, Huang and Zhang in \cite{ZHZ2013}.

\begin{definition}\label{MatrixRIP}
Let $0 < p\leq1$ or $p=2$, $\mathcal{A} : \mathbb{R}^{m¡Á\times n}\rightarrow \mathbb{R}^l$ be a linear map. Without loss of generality,
assume $m \leq n$. For every integer $k$ with $1\leq k\leq \min\{m,n\}$, define the matrix $k$-th restricted $p$-isometry
constant to be the smallest number $\delta_k=\delta_k(\mathcal{A})$ such that
\begin{align*}
(1-\delta_k)\|X\|_{S_2}^p\leq\|\mathcal{A}(X)\|_p^p\leq(1+\delta_k)\|X\|_{S_2}^p
\end{align*}
holds for all matrices $X$ of rank at most $k$.
\end{definition}

\subsection{Truncated sparse approximation property implies the first inequality in Restricted Isometry Property\label{s3.1}}
\hskip\parindent

Firstly, we show that if a measurement matrix satisfies truncated sparse approximation property of order $k$, then the first inequality in restricted isometry property of order $k$ and of order $2k$ hold for certain different constants $\delta_{k}$ and $\delta_{2k}$, respectively.

\begin{theorem}\label{k2ksparsevector}
Let $0<q\leq r\leq\infty$, $0<p\leq\infty$.
\item[(1)]If a matrix $A$ satisfies $t$-truncated $(l_r,l_q)-l_p$ sparse approximation property of order $k$ with $t=|T|$, constants $D\in(0,\infty)$ and $\beta\in(0,1)$, then
for all $x\in\Sigma_k(T)=\{x\in\mathbb{R}^n:\|x_T\|_0\leq k\}$,
\begin{align}
\frac{1}{C_1}\|x\|_r^q\leq\|Ax\|_p^q,
\end{align}
where
$$
C_1=\bigg(\max\Big\{\Big(\frac{|T^c|}{k}\Big)^q,1\Big\}+1\bigg)D,
$$
and for all $x\in\Sigma_{2k}(T)$,
\begin{align}
\frac{1}{C_2}\|x\|_r^q\leq\|Ax\|_p^q,
\end{align}
where
$$
C_2=\Bigg(\frac{2}{1-\beta}+\max\bigg\{\Big(\frac{|T^c|}{k}\Big)^q,1\bigg\}
\bigg(\Big(\frac{|T\cap K^c|}{k}\Big)^{1-q/r}\frac{2\beta}{1-\beta}+1\bigg)\Bigg)D.
$$

\item[(2)]If a matrix $A$ satisfies $t$-truncated $(l_r,l_q)-\text{Dantzig-selsector}$ sparse approximation property of order $k$ with $t=|T|$, constants $D\in(0,\infty)$ and $\beta\in(0,1)$, then
for all $x\in\Sigma_k(T)$,
\begin{align}
\frac{1}{C_1}\|x\|_r^q\leq\|A^*Ax\|_\infty^q,
\end{align}
and for all $x\in\Sigma_{2k}(T)$,
\begin{align}
\frac{1}{C_2}\|x\|_r^q\leq\|A^*Ax\|_\infty^q,
\end{align}
where $C_1$ and $C_2$ are the constants in above (1).
\end{theorem}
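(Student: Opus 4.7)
The plan is to write $x = x_{T^c} + x_{T\cap K} + x_{T \cap K^c}$, where $K$ is the index set of the $k$ largest (in magnitude) entries of $x_T$, and to bound the three pieces separately, then combine them using the elementary inequality $\|v\|_r^q \leq \sum_i \|v_{S_i}\|_r^q$ (valid for $q \leq r$ and any partition $\{S_i\}$), exactly as in \textbf{Step 2} of the proof of Theorem \ref{vectorstablerecovery}.

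For the $\Sigma_k(T)$ bound, $x_T$ is supported on $K$, so $\sigma_k(x_T)_q = 0$ and $x_{T \cap K^c} = 0$; the truncated sparse approximation property (TSAP) with truncation set $T$ gives $\|x_K\|_r^q \leq D\|Ax\|_p^q$ immediately, leaving only $\|x_{T^c}\|_r^q$ to be controlled. To bound it I would adapt the \textbf{Step 5} construction from the proof of Theorem \ref{vectorstablerecovery}. In Case 1 ($|T^c| \leq k$) I pick $K' \subset K$ with $|T^c \cup K'| = k$ and set $\tilde{T} = K^c \cup K'$; this has $|\tilde{T}| = t$ and $\tilde{T} \setminus (T^c \cup K') = T \cap K^c$ on which $x$ vanishes, so the TSAP applied with truncation set $\tilde{T}$ to the subset $T^c \cup K'$ gives $\|x_{T^c}\|_r^q \leq \|x_{T^c \cup K'}\|_r^q \leq D\|Ax\|_p^q$. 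In Case 2 ($|T^c| > k$) I let $K''$ denote the index set of the top $k$ entries of $x_{T^c}$, set $\tilde{\tilde{T}} = (T\cap K^c) \cup K''$, and combine the averaging bound $\|x_{T^c}\|_r^q \leq (|T^c|/k)^q\|x_{K''}\|_r^q$ with the TSAP applied to $\tilde{\tilde{T}}$ to obtain $\|x_{T^c}\|_r^q \leq (|T^c|/k)^q D\|Ax\|_p^q$. Combining both cases yields $\|x_{T^c}\|_r^q \leq \max\{(|T^c|/k)^q, 1\}D\|Ax\|_p^q$, and summing with $\|x_K\|_r^q$ produces the constant $C_1$.

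For the $\Sigma_{2k}(T)$ bound, $x_{T \cap K^c}$ is $k$-sparse rather than zero. The TSAP gives $\|x_K\|_r^q \leq D\|Ax\|_p^q + \beta k^{q/r-1}\|x_{T \cap K^c}\|_q^q$, and combining the $\ell_q \hookrightarrow \ell_r$ embedding on the support of $x_{T \cap K^c}$ (of cardinality $\leq k$), namely $\|x_{T \cap K^c}\|_q^q \leq k^{1-q/r}\|x_{T \cap K^c}\|_r^q$, with the position-wise domination $\|x_{T \cap K^c}\|_r^q \leq \|x_K\|_r^q$ (since each nonzero entry of $x_{T \cap K^c}$ is dominated by some entry of $x_K$) yields $\|x_K\|_r^q \leq \frac{D}{1-\beta}\|Ax\|_p^q$ by absorption; the same bound a fortiori holds for $\|x_{T \cap K^c}\|_r^q$, contributing the $\frac{2}{1-\beta}D$ portion of $C_2$. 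For $\|x_{T^c}\|_r^q$ I would rerun the Case 1/Case 2 construction; this time the TSAP on $\tilde{T}$ or $\tilde{\tilde{T}}$ produces the residual $\beta k^{q/r-1}\|x_{T \cap K^c}\|_q^q$, which upon the crude embedding $\|x_{T \cap K^c}\|_q^q \leq |T \cap K^c|^{1-q/r}\|x_{T \cap K^c}\|_r^q$ (reflecting that we no longer know where in $T\cap K^c$ the support sits) and substitution of the bound on $\|x_{T \cap K^c}\|_r^q$ gives the remaining $\max\{(|T^c|/k)^q, 1\}\bigl((|T \cap K^c|/k)^{1-q/r}\tfrac{2\beta}{1-\beta} + 1\bigr)D$ contribution to $C_2$.

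The Dantzig-selector statements in part (2) are proved by the identical argument after replacing the TSAP defined via $\|Ax\|_p$ by the one defined via $\|A^*Ax\|_\infty$; every combinatorial step in the construction is unaffected. The principal obstacle is the two-case estimate for $\|x_{T^c}\|_r^q$: because the TSAP only directly controls indices inside a $T$-set of size $t$, reaching $T^c$ forces one to construct auxiliary truncation sets $\tilde{T}$ (when $|T^c| \leq k$) or $\tilde{\tilde{T}}$ (when $|T^c| > k$) that absorb $T^c$ (or a top-$k$ dominator of it) inside the new $T$-set while simultaneously making the leftover slots coincide with $T \cap K^c$, so that the $\sigma_k$-term generated by the TSAP is bounded by a quantity we have already controlled.
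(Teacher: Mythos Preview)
Your proposal is correct and follows essentially the same route as the paper: split $x = x_{T^c} + x_{T\cap K} + x_{T\cap K^c}$, bound $x_{T\cap K}$ directly from the TSAP, bound $x_{T^c}$ via the two-case auxiliary truncation-set construction of \textbf{Step 5}, and combine. The one place you diverge slightly is the $\Sigma_{2k}(T)$ bound on $\|x_T\|_r^q$: the paper writes $x_T = x_{K_1} + x_{K_2}$ as a symmetric split into two $k$-sparse pieces, applies the TSAP to \emph{each} to get $\|x_{K_i}\|_r^q \leq D\|Ax\|_p^q + \beta\|x_{K_j}\|_r^q$, sums, and absorbs; you instead apply the TSAP once to $K$ and close the loop via the pointwise domination $\|x_{T\cap K^c}\|_r^q \leq \|x_K\|_r^q$. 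Both arguments land on $\|x_T\|_r^q \leq \tfrac{2D}{1-\beta}\|Ax\|_p^q$, and the subsequent handling of $x_{T^c}$ is identical. (Your domination argument in fact yields $\|x_{T\cap K^c}\|_r^q \leq \tfrac{D}{1-\beta}\|Ax\|_p^q$, which would give $\tfrac{\beta}{1-\beta}$ rather than $\tfrac{2\beta}{1-\beta}$ in the $T^c$ contribution; the stated $C_2$ then follows a fortiori, matching the paper's slightly looser bound obtained from $\|x_{T\cap K^c}\|_r^q \leq \|x_T\|_r^q$.)
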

\begin{proof}
We only prove part (1).
Let $K$ denotes the index set of the $k$ largest (in magnitude) coefficients of $x_T$.

(a). First, we deal with $k$-sparse vector.
Take a $k$-sparse vector $x_T\in \mathbb{R}^n$, then $x_T=x_{K}$ and $\sigma_k(x_T)_q=\|x_{T\cap K^c}\|_q=0$. From $t$-truncated $(l_r,l_q)-l_p$ sparse approximation property, one can get
$$
\|x_T\|_r^q=\|x_K\|_r^q\leq D\|Ax\|_p^q.
$$
Similar to the \textbf{Step 5} of the proof of Theorem \ref{vectorstablerecovery}, we can get
\begin{align*}
\|x_{T^c}\|_{r}^q&\leq\max\Big\{\Big(\frac{|T^c|}{k}\Big)^q,1\Big\}\big(D\|Ax\|_p^q+\beta k^{q/r-1}\|x_{T\cap K^c}\|_q^q\big)\\
&=\max\Big\{\Big(\frac{|T^c|}{k}\Big)^q,1\Big\}D\|Ax\|_p^q.
\end{align*}
Therefore,
\begin{align*}
\|x\|_r^q&\leq\|x_T\|_r^q+\|x_{T^c}\|_r^q\leq D\|Ax\|_p^q+\max\Big\{\Big(\frac{|T^c|}{k}\Big)^q,1\Big\}D\|Ax\|_p^q\\
&=\bigg(\max\Big\{\Big(\frac{|T^c|}{k}\Big)^q,1\Big\}+1\bigg)D\|Ax\|_p^q\\
&:=C_1\|Ax\|_p^q.
\end{align*}

(b). Next, we consider the $2k$-sparse vector. Take a $2k$-sparse vector $x_{T}\in\mathbb{R}^n$, we can write
$$
x_T=x_{K_1}+x_{K_2},
$$
where $K_1, K_2\subset T$ with $\|x_{K_j}\|_0\leq k$ for $j=1,2$ and $K_1\cap K_2=\emptyset$, then we have
\begin{align}\label{K1K2}
\|x_T\|_r^q\leq \|x_{K_1}\|_r^q+\|x_{K_2}\|_r^q.
\end{align}
The $t$-truncated $(l_r,l_q)-l_p$ sparse approximation property leads us to conclude that
\begin{align}\label{lrKiKj}
\|x_{K_i}\|_r^q\leq D\|Ax\|_p^q+\beta k^{q/r-1}\|x_{K_j}\|_q^q\leq D\|Ax\|_p^q+\beta\|x_{K_j}\|_r^q,
\end{align}
where $i,j\in\{1,2\}$ and $i\neq j$. Therefore, we can get
$$
\|x_{K_1}\|_r^q+\|x_{K_2}\|_r^q\leq 2D\|Ax\|_p^q+\beta(\|x_{K_1}\|_r^q+\|x_{K_2}\|_r^q),
$$
which implies that
\begin{align}\label{e9}
\|x_T\|_r^q\leq\|x_{K_1}\|_r^q+\|x_{K_2}\|_r^q\leq\frac{2D}{1-\beta}\|Ax\|_p^q.
\end{align}

On the other hand, similar to the \textbf{Step 5} of the proof of Theorem \ref{vectorstablerecovery}, we also can get
\begin{align*}
\|x_{T^c}\|_r^q\leq\max\Big\{\Big(\frac{|T^c|}{k}\Big)^q,1\Big\}\big(D\|Ax\|_p^q+\beta k^{q/r-1}\|x_{T\cap K^c}\|_q^q\big),
\end{align*}
and by (\ref{e9}), we get
\begin{align*}
\|x_{T\cap K^c}\|_q^q&\leq|T\cap K^c|^{1-q/r}\|x_{T\cap K^c}\|_r^q\leq|T\cap K^c|^{1-q/r}\|x_T\|_r^q\\
&\leq\frac{2D}{1-\beta}|T\cap K^c|^{1-q/r}\|Ax\|_p^q.
\end{align*}
Therefore, we can obtain a upper bound for $\|x_{T^c}\|_r^q$ as follows
\begin{align}\label{e10}
\|x_{T^c}\|_r^q&\leq\max\bigg\{\Big(\frac{|T^c|}{k}\Big)^q,1\bigg\}
\bigg(\Big(\frac{|T\cap K^c|}{k}\Big)^{1-q/r}\frac{2\beta}{1-\beta}+1\bigg)D\|Ax\|_p^q.
\end{align}

Therefore, combining (\ref{e9}) and (\ref{e10}), we can get
\begin{align*}
\|x\|_r^q&\leq\|x_T\|_r^q+\|x_{T^c}\|_r^q\\
&\leq\frac{2D}{1-\beta}\|Ax\|_p^q
+\max\bigg\{\Big(\frac{|T^c|}{k}\Big)^q,1\bigg\}
\bigg(\Big(\frac{|T\cap K^c|}{k}\Big)^{1-q/r}\frac{2\beta}{1-\beta}+1\bigg)D\|Ax\|_p^q\\
&\leq\Bigg(\frac{2}{1-\beta}+\max\bigg\{\Big(\frac{|T^c|}{k}\Big)^q,1\bigg\}
\bigg(\Big(\frac{|T\cap K^c|}{k}\Big)^{1-q/r}\frac{2\beta}{1-\beta}+1\bigg)\Bigg)D\|Ax\|_p^q\\
&:=C_2\|Ax\|_p^q.
\end{align*}
\end{proof}

\begin{remark}
When $p=2$ or $0<p\leq1$, and $r=2$, we can obtain from Theorem \ref{k2ksparsevector} that if measurement matrix $A$ has the $t$-truncated $(l_r,l_q)-l_p$ sparse approximation property or $t$-truncated $(l_r,l_q)-\text{Dantzig-selsector}$ sparse approximation property of order $k$ with $t=|T|$, constants $D\in(0,\infty)$ and $\beta\in(0,1)$, then the first inequality in the restricted isometry property of order $k$ and of order $2k$ hold for constants $\delta=1-1/{(C_1)^{p/q}}$ and $\delta=1-1/{(C_2)^{p/q}}$, respectively.
\end{remark}

\subsection{Restricted Isometry Property implies Truncated sparse approximation property \label{s3.2}}
\hskip\parindent

In this subsection, we will show that the truncated sparse approximation property of order $k$ can be deduced from the restricted 2-isometry property with $\delta_{tk}<\sqrt{(t-1)/t}$ for some $t\geq4/3$. And especially when $T^c=\emptyset$-the classic case, it shows us that the $(l_2,l_1)-l_2$ sparse approximation property of order $k$ and and $(l_2,l_1)-\text{Dantzig~selector}$ sparse approximation property of order $k$ is weaker than restricted 2-isometry property of order $tk$. Our main result can be stated as follows.

\begin{theorem}\label{RIPSRP}
If $A$ has restricted 2-isometry property of order $t(k+|T^c|)$ with $\delta_{t(k+|T^c|)}<\sqrt{(t-1)/t}$ for some $t\geq 4/3$, then
\begin{align*}
\|x_{K}\|_2&\leq\frac{2\sqrt{1+\delta}}{1-\delta^2}\|Ax\|_2+ \frac{\delta}{\sqrt{(t-1)(1-\delta^2)}}\frac{\sigma_k(x_T)_1}{\sqrt{k}}\\
&=:D_1\|Ax\|_2+\beta \frac{\sigma_k(x_T)_1}{\sqrt{k}}
\end{align*}
and
\begin{align*}
\|x_{K}\|_2&\leq\frac{2\sqrt{2(k+|T^c|)}}{1-\delta^2}\|A^*Ax\|_{\infty}+ \frac{\delta}{\sqrt{(t-1)(1-\delta^2)}}\frac{\sigma_k(x_T)_1}{\sqrt{k}}\\
&=:D_2\|A^*Ax\|_{\infty}+\beta \frac{\sigma_k(x_T)_1}{\sqrt{k}},
\end{align*}
where $T\subset\{1,\ldots,n\}$ is the any set with $|T|=t$, $K\subset T$ is the index set of the $k$ largest (in magnitude) coefficients of $x_T$.
\end{theorem}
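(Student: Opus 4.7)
My plan is to adapt the Cai--Zhang sparse representation technique (the machinery by which $\delta_{tk}<\sqrt{(t-1)/t}$ is known to imply stable $\ell_1$ recovery) to the truncated setting. Set $s:=k+|T^c|$ and $S:=K\cup T^c$, so $|S|=s$, $S^c=T\cap K^c$, and $\|x_{S^c}\|_1=\sigma_k(x_T)_1$. Since $\|x_K\|_2\leq\|x_S\|_2$, it suffices to control $\|x_S\|_2$.

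I would start from the identity $\|Ax_S\|_2^2=\langle Ax_S,Ax\rangle-\langle Ax_S,Ax_{S^c}\rangle$. The RIP lower bound gives $\|Ax_S\|_2^2\geq(1-\delta)\|x_S\|_2^2$, and the first term is estimated by Cauchy--Schwarz plus RIP, $|\langle Ax_S,Ax\rangle|\leq\sqrt{1+\delta}\,\|x_S\|_2\|Ax\|_2$. For the Dantzig selector variant, one instead writes $\langle Ax_S,Ax\rangle=\langle x_S,A^{*}Ax\rangle$, applies H\"older, and controls $\|x_S\|_1$ by $\sqrt{k}\,\|x_K\|_2+\sqrt{|T^c|}\,\|x_{T^c}\|_2\leq\sqrt{2(k+|T^c|)}\,\|x_S\|_2$ via $(a+b)^2\leq 2(a^2+b^2)$, which accounts for the factor $\sqrt{2(k+|T^c|)}$ that appears in $D_2$.

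The decisive step is the cross term $|\langle Ax_S,Ax_{S^c}\rangle|$. I would invoke the Cai--Zhang polytope lemma to write $x_{S^c}=\sum_j\lambda_j u_j$ as a convex combination of vectors supported on $S^c$ with $\|u_j\|_0\leq(t-1)s$ and $\|u_j\|_2\leq\sigma_k(x_T)_1/\sqrt{(t-1)s}$ (the degenerate case $\|x_{S^c}\|_\infty>\sigma_k(x_T)_1/((t-1)s)$ being handled by the observation that $x_{S^c}$ is then itself already $(t-1)s$-sparse). Because $x_S$ is $s$-sparse and each $u_j$ is $(t-1)s$-sparse on the disjoint support $S^c$, the combined sparsity is at most $ts=t(k+|T^c|)$, so the shifted RIP inner product inequality $|\langle Au,Av\rangle|\leq\delta\,\|u\|_2\|v\|_2$ yields $|\langle Ax_S,Au_j\rangle|\leq\delta\,\|x_S\|_2\|u_j\|_2$. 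Averaging against the coefficients $\lambda_j$ gives $|\langle Ax_S,Ax_{S^c}\rangle|\leq\delta\,\|x_S\|_2\cdot\sigma_k(x_T)_1/\sqrt{(t-1)s}$.

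Feeding these estimates into $(1-\delta)\|x_S\|_2^2\leq|\langle Ax_S,Ax\rangle|+|\langle Ax_S,Ax_{S^c}\rangle|$, dividing by $\|x_S\|_2$, and using $s\geq k$ to pass from $\sqrt{(t-1)s}$ to $\sqrt{(t-1)k}$ produces a bound on $\|x_K\|_2\leq\|x_S\|_2$ of the required shape. The principal technical obstacle is pinning down the precise constants $2\sqrt{1+\delta}/(1-\delta^2)$ and $\delta/\sqrt{(t-1)(1-\delta^2)}$ rather than the cruder $\sqrt{1+\delta}/(1-\delta)$ and $\delta/((1-\delta)\sqrt{t-1})$ coming out of a one-shot rearrangement; matching the $(1-\delta^2)$ denominator requires a Young/AM--GM balancing whose tuning parameter is tied to the hypothesis $\delta<\sqrt{(t-1)/t}$, $t\geq 4/3$, which is precisely the regime that keeps the coefficient of $\|x_S\|_2^2$ positive after rearrangement. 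The Dantzig selector estimate proceeds identically with $\sqrt{2(k+|T^c|)}\,\|A^{*}Ax\|_\infty$ playing the role of $\sqrt{1+\delta}\,\|Ax\|_2$.
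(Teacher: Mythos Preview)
Your overall strategy---reduce to $S=K\cup T^c$, decompose $x_{S^c}$ via the Cai--Zhang polytope lemma, and exploit RIP at sparsity level $t(k+|T^c|)$---matches the paper's skeleton, but the two approaches diverge at the core estimate, and that divergence is a genuine gap for the constants actually claimed.

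First a minor repair: the polytope lemma needs both an $\ell_1$ and an $\ell_\infty$ bound on the vector being decomposed. Your parenthetical (``if $\|x_{S^c}\|_\infty$ exceeds the threshold then $x_{S^c}$ is already $(t-1)s$-sparse'') is not correct in general, since some entries may be large and others small. The paper handles this by splitting $S^c=K'\cup K''$ according to whether $|x_j|>\alpha/(t-1)$ with $\alpha=\|x_{S^c}\|_1/s$, absorbing the at most $(t-1)s$ large indices $K'$ into the main vector $y:=x_S+x_{K'}$, and applying the polytope lemma only to $x_{K''}$.

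The more serious issue is the constants. Your route---lower bound $\|Ax_S\|_2^2\ge(1-\delta)\|x_S\|_2^2$, then bound the cross term via $|\langle Ax_S,Au_j\rangle|\le\delta\|x_S\|_2\|u_j\|_2$---yields, after dividing by $\|x_S\|_2$, the coefficient $\beta_{\text{you}}=\delta/\big((1-\delta)\sqrt{t-1}\big)$. This is strictly larger than the stated $\beta=\delta/\sqrt{(t-1)(1-\delta^2)}$, and in fact $\beta_{\text{you}}\ge 1$ once $\delta\ge\sqrt{t-1}/(1+\sqrt{t-1})$, which is strictly below the hypothesis threshold $\sqrt{(t-1)/t}$ (e.g.\ for $t=4/3$ your $\beta$ exceeds $1$ on the whole interval $\delta\in(\tfrac{\sqrt{3}-1}{2},\tfrac12)$). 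Because your inequality is already linear in $\|x_S\|_2$ after the division, no Young/AM--GM post-processing can convert the $1-\delta$ into $\sqrt{1-\delta^2}$; the loss is baked into having used RIP twice (once for the lower bound, once for the cross term). The paper instead avoids splitting off the cross term and uses a single application of RIP to the $ts$-sparse vectors $(1\pm\delta)y\pm\delta u^j$ through the $\delta$-weighted polarization identity
\[
\langle Ay,Ax\rangle=\frac{1}{4\delta}\sum_j\rho_j\Big(\big\|A\big((1+\delta)y+\delta u^j\big)\big\|_2^2-\big\|A\big((1-\delta)y-\delta u^j\big)\big\|_2^2\Big).
\]
Bounding the right side below by RIP gives $\tfrac{1-\delta^2}{2}\|y\|_2^2-\tfrac{\delta^2}{2}\|u^j\|_2^2$, a relation \emph{quadratic} in the tail term; completing the square then produces the $(1-\delta^2)$ denominator and the stated $\beta$, which hits $1$ exactly at $\delta=\sqrt{(t-1)/t}$. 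The Dantzig selector version follows the same identity with $\langle Ay,Ax\rangle\le\|y\|_1\|A^*Ax\|_\infty$ on the left.
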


\begin{proof}
Our proof adapts Fourcat's \cite[Theorem 5]{F2017} approach.
Let $K_0:=K\cup T^c$ and
$$
\alpha=\frac{\|x_{T\cap K^c}\|_1}{k+|T^c|},
$$
where $k+|T^c|=|K_0|$.
We partition the set $T\cap K^c$-complement of
$K\cup T^c$ as
$$
T\cap K^c=K'\cup K'',
$$
where
\begin{align*}
K'&:=\Big\{j\in T\cap K^c: |x_j|>\frac{\alpha}{t-1}\Big\},\\
K''&:=\Big\{j\in T\cap K^c: |x_j|\leq\frac{\alpha}{t-1}\Big\}.
\end{align*}
Denote that $k'=\|v_{K'}\|_0$, we can derive that $k'\leq (t-1)(k+|T^c|)$ from
$$
\alpha(k+|T^c|)=\|x_{T\cap K^c}\|_1\geq\|x_{K'}\|_1\geq k'\frac{\alpha}{t-1}.
$$
By
$$
\|x_{K''}\|_{\infty}\leq \frac{\alpha}{t-1}
$$
and
$$
\|x_{K''}\|_1=\|x_{T\cap K^c}\|_1-\|x_{K'}\|_1\leq \alpha(k+|T^c|)-\frac{k'\alpha}{t-1}=((t-1)(k+|T^c|)-k')\frac{\alpha}{t-1},
$$
we know that the vector $x_{K''}$ belongs to a scaled version of the polytope $\big((t-1)(k+|T^c|)-k'\big)\mathcal{B}^{l_1}(\alpha/(t-1))\cap \mathcal{B}^{l_\infty}(\alpha/(t-1))$. By sparse representation of a polytope in \cite[Lemma 1.1]{CZ2014}, this polytope can be represented as the convex hull of $\big((t-1)(k+|T^c|)-k'\big)$-sparse vectors:
$$
x_{K''}=\sum_{j=1}^\tau\rho_ju^j,
$$
where $u^j$ is $\big(\big(t-1)(k+|T^c|)-k'\big)$-sparse and
\begin{align*}
\sum_{j=1}^{\tau}&\rho_j=1,~0\leq\rho_j\leq1,j=1,\ldots,\tau\\
\text{supp}&(u^j)\subset \text{supp}(x_{K''})\\
\|u^j\|_1&=\|x_{K''}\|_1, \|u^j\|_{\infty}\leq \frac{\alpha}{t-1}.
\end{align*}
Hence,
\begin{align}\label{e6}
\|u^j\|_2\leq\|u^j\|_{\infty}\sqrt{\|u^j\|_0}\leq \frac{\alpha}{t-1} \sqrt{\big(t-1)(k+|T^c|)-k'}\leq\sqrt{\frac{k+|T^c|}{t-1}}\alpha.
\end{align}

We observe that
\begin{align}\label{e12}
\langle A(x_{K_0}&+x_{K'}),Ax\rangle\\
&=\frac{1}{4\delta}\sum_{j=1}^\tau\rho_j\Big(\|A\big((1+\delta)(x_{K_0}+x_{K'})+\delta
u^j\big)\|_2^2-\|A\big((1-\delta)(x_{K_0}+x_{K'})-\delta u^j\big)\|_2^2\Big).
\end{align}
Owing to
\begin{align*}
\|(1+\delta)(x_{K_0}+x_{K'})+\delta u^j\|_0&=\|(1-\delta)(x_{K_0}+x_{K'})-\delta u^j\|_0\\
&\leq (k+|T^c|)+k'+\big((t-1)(k+|T^c|)-k'\big)=t(k+|T^c|),
\end{align*}
therefore we take $\delta=\delta_{t(k+|T^c|)}$.

By $\|x_{K_0}+x_{K'}\|_0\leq (k+|T^c|)+k'\leq t(k+|T^c|)$,  we give a upper bound estimate for the left-hand side of (\ref{e12}) as follows
\begin{align*}
\langle A(x_{K_0}+x_{K'}),Ax\rangle \leq\|A(x_{K_0}+x_{K'})\|_2\|Ax\|_2\leq\sqrt{1+\delta}\|Ax\|_2\|x_{K_0}+x_{K'}\|_2
\end{align*}
or
\begin{align*}
\langle A(x_{K_0}+x_{K'}),Ax\rangle\leq\|x_{K_0}+x_{K'}\|_1\|A^*Ax\|_\infty\leq\sqrt{t(k+|T^c|)}\|A^*Ax\|_\infty\|x_{K_0}+x_{K'}\|_2.
\end{align*}

On the other hand, by restricted isometry property, the right-hand side of (\ref{e12}) is bounded from below by
\begin{align*}
\frac{1}{4\delta}&\sum_{j=1}^\tau\rho_j\Big((1-\delta)\big((1+\delta)^2\|x_{K_0}+x_{K'}\|_2^2+\delta^2\|u^j\|_2^2\big)
-(1+\delta)\big((1-\delta)^2\|x_{K_0}+x_{K'}\|_2^2+\delta^2 \|u^j\|_2^2\big)\Big)\\
&=\frac{1}{4\delta}\sum_{j=1}^N\rho_j\big(2\delta(1+\delta)(1-\delta)\|x_{K_0}+x_{K'}\|_2^2-2\delta^3\|u^j\|_2^2\big)\\
&\geq\frac{1-\delta^2}{2}\|x_{K_0}+x_{K'}\|_2^2-\frac{\delta^2}{2}\Big(\sqrt{\frac{k+|T^c|}{t-1}}\alpha\Big)^2\\
&=\frac{1-\delta^2}{2}\|x_{K_0}+x_{K'}\|_2^2-\frac{\delta^2}{2}\frac{\|x_{T\cap K^c}\|_1^2}{(t-1)(k+|T^c|)},
\end{align*}
where the last inequality follows by (\ref{e6}).

Combining the two bounds gives
\begin{align*}
\|x_{K_0}+x_{K'}\|_2^2-\frac{\delta^2}{(t-1)(1-\delta^2)}\Big(\frac{\|x_{T\cap K^c}\|_1}{\sqrt{k+|T^c|}}\Big)^2
\leq\frac{2\sqrt{1+\delta}}{1-\delta^2}\|Ax\|_2\|x_{K_0}+x_{K'}\|_2
\end{align*}
or
\begin{align*}
\|x_{K_0}+x_{K'}\|_2^2-\frac{\delta^2}{(t-1)(1-\delta^2)}\Big(\frac{\|x_{T\cap K^c}\|_1}{\sqrt{k+|T^c|}}\Big)^2
\leq\frac{2\sqrt{2(k+|T^c|)}}{1-\delta^2}\|A^*Ax\|_{\infty}\|x_{K_0}+x_{K'}\|_2,
\end{align*}
which is equivalent to
\begin{align*}
\Big(\|x_{K_0}+x_{K'}\|_2-\frac{\sqrt{1+\delta}}{1-\delta^2}\|Ax\|_2\Big)^2
\leq\frac{\delta^2}{(t-1)(1-\delta^2)}\Big(\frac{\|x_{T\cap K^c}\|_1}{\sqrt{k+|T^c|}}\Big)^2
+\frac{1+\delta}{(1-\delta^2)^2}\|Ax\|_2^2
\end{align*}
or
\begin{align*}
\Big(\|x_{K_0}+x_{K'}\|_2-\frac{\sqrt{2(k+|T^c|)}}{1-\delta^2}\|A^*Ax\|_{\infty}\Big)^2
\leq\frac{\delta^2}{(t-1)(1-\delta^2)}\Big(\frac{\|x_{T\cap K^c}\|_1}{\sqrt{k+|T^c|}}\Big)^2
+\frac{2(k+|T^c|)}{(1-\delta^2)^2}\|A^*Ax\|_2^2.
\end{align*}

As a result, we obtain
\begin{align*}
\|x_{K}\|_2\leq\|x_{K_0}\|_2\leq\|x_{K_0}+x_{K'}\|_2
&\leq\frac{2\sqrt{1+\delta}}{1-\delta^2}\|Ax\|_2+ \frac{\delta}{\sqrt{(t-1)(1-\delta^2)}}\frac{\sigma_k(x_T)_1}{\sqrt{k+|T^c|}}\\
&\leq\frac{2\sqrt{1+\delta}}{1-\delta^2}\|Ax\|_2+ \frac{\delta}{\sqrt{(t-1)(1-\delta^2)}}\frac{\sigma_k(x_T)_1}{\sqrt{k}}\\
&=:D_1\|Ax\|_2+\beta \frac{\sigma_k(x_T)_1}{\sqrt{k}}
\end{align*}
and
\begin{align*}
\|x_{K}\|_2\leq\|x_{K_0}\|_2\leq\|x_{K_0}+x_{K'}\|_2
&\leq\frac{2\sqrt{2(k+|T^c|)}}{1-\delta^2}\|A^*Ax\|_{\infty}+ \frac{\delta}{\sqrt{(t-1)(1-\delta^2)}}\frac{\sigma_k(x_T)_1}{\sqrt{k+|T^c|}}\\
&\leq\frac{2\sqrt{2(k+|T^c|)}}{1-\delta^2}\|A^*Ax\|_{\infty}+ \frac{\delta}{\sqrt{(t-1)(1-\delta^2)}}\frac{\sigma_k(x_T)_1}{\sqrt{k}}\\
&=:D_2\|A^*Ax\|_2+\beta \frac{\sigma_k(x_T)_1}{\sqrt{k}},
\end{align*}
which are the desired inequalities with
$$
\beta=\frac{\delta}{\sqrt{(t-1)(1-\delta^2)}}<1
$$
when $\delta_{t(k+|T^c|)}=\delta <\sqrt{\frac{t-1}{t}}$, and with
$$
D_1=\frac{2\sqrt{1+\delta}}{1-\delta^2},
D_2=\frac{2\sqrt{2(k+|T^c|)}}{1-\delta^2}.
$$
\end{proof}

From Theorems \ref{RIPSRP} and Theorem \ref{vectorstablerecovery}, we can obtain that under the condition $\delta_{t(k+|T^c|)}<\sqrt{\frac{t-1}{t}}$, the
stable recovery holds as follows.

\begin{theorem}\label{vectorstabilityrobustnessestimate}
Consider the signal model (\ref{systemequationsnoise}) with $\|z\|_2\leq\varepsilon$ and let $\hat{x}^{l_2}$ is the minimizer of (\ref{Truncatedq}) with
$\mathcal{B}=\mathcal{B}^{l_2}(\eta)$ defined in (\ref{lpboundednoise}) for some $\eta\geq\varepsilon$. If measurement matrix $A$ has restricted isometry property of order $t(k+|T^c|)$ with $\delta_{t(k+|T^c|)}<\sqrt{(t-1)/t}$ for some $t\geq4/3$, then
\begin{align*}
\|\hat{x}^{l_2}-x\|_2&\leq
\bigg(\max\Big\{\Big(\frac{|T^c|}{k}\Big),1\Big\}+1+\Big(\frac{|T|-k}{k}\Big)^{1/2}\bigg)\frac{D_1}{1-\beta}(\varepsilon+\eta)\nonumber\\
&\hspace*{12pt}+\bigg(\max\Big\{\Big(\frac{|T^c|}{k}\Big),1\Big\}+1+\Big(\frac{|T|-k}{k}\Big)^{1/2}\bigg)\frac{2\beta}{1-\beta}
k^{-1/2}\sigma_k(x_T)_1,
\end{align*}
where $D_1$ is the constant in above Theorem \ref{RIPSRP}.

Consider the signal model (\ref{systemequationsnoise}) with $\|A^*z\|_\infty\leq\varepsilon$ and let $\hat{x}^{DS}$ is the minimizer of (\ref{Truncatedq}) with $\mathcal{B}=\mathcal{B}^{DS}(\eta)$ defined in (\ref{Dantzigselectornoise}) for some $\eta\geq\varepsilon$. If measurement matrix $A$ has restricted isometry property of order $t(k+|T^c|)$ with $\delta_{t(k+|T^c|)}<\sqrt{(t-1)/t}$ for some $t\geq4/3$, then
\begin{align*}
\|\hat{x}^{l_2}-x\|_2&\leq
\bigg(\max\Big\{\Big(\frac{|T^c|}{k}\Big),1\Big\}+1+\Big(\frac{|T|-k}{k}\Big)^{1/2}\bigg)\frac{D_1}{1-\beta}(\varepsilon+\eta)\nonumber\\
&\hspace*{12pt}+\bigg(\max\Big\{\Big(\frac{|T^c|}{k}\Big),1\Big\}+1+\Big(\frac{|T|-k}{k}\Big)^{1/2}\bigg)\frac{2\beta}{1-\beta}
k^{-1/2}\sigma_k(x_T)_1,
\end{align*}
where $D_2$ is the constant in above Theorem \ref{RIPSRP}.
\end{theorem}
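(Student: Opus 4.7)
The plan is to obtain this theorem as an immediate chaining of Theorem \ref{RIPSRP} with Theorem \ref{vectorstablerecovery}, so essentially no new analytic work is required; the only task is to match parameters correctly and confirm the hypotheses transfer.

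First, I would invoke Theorem \ref{RIPSRP}: under the standing hypothesis $\delta_{t(k+|T^c|)} < \sqrt{(t-1)/t}$ with $t \geq 4/3$, that theorem produces two inequalities of the form
\begin{align*}
\|x_K\|_2 \leq D_1 \|Ax\|_2 + \beta\, k^{-1/2}\sigma_k(x_T)_1, \qquad \|x_K\|_2 \leq D_2 \|A^{*}Ax\|_\infty + \beta\, k^{-1/2}\sigma_k(x_T)_1,
\end{align*}
with explicit constants $D_1 = 2\sqrt{1+\delta}/(1-\delta^2)$, $D_2 = 2\sqrt{2(k+|T^c|)}/(1-\delta^2)$, and $\beta = \delta/\sqrt{(t-1)(1-\delta^2)}$, where $\delta := \delta_{t(k+|T^c|)}$. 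Reading these directly against Definition \ref{SparseRieszProperty} with $r = 2$, $q = 1$, $p = 2$ shows exactly that $A$ satisfies the $t$-truncated $(l_2,l_1)-l_2$ sparse approximation property of order $k$ with constants $(D_1,\beta)$, and simultaneously the $t$-truncated $(l_2,l_1)-\text{Dantzig-selector}$ sparse approximation property of order $k$ with constants $(D_2,\beta)$.

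Next I would check that $\beta \in (0,1)$ under our hypothesis. This is a short algebraic manipulation: the condition $\delta < \sqrt{(t-1)/t}$ is equivalent to $\delta^2 < (t-1)(1-\delta^2)$, i.e.\ to $\beta < 1$ (indeed this computation is already carried out at the end of the proof of Theorem \ref{RIPSRP}). Hence the truncated sparse approximation property holds with the precise parameter ranges required by Theorem \ref{vectorstablerecovery}.

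Finally, I would apply Theorem \ref{vectorstablerecovery} with the parameter choice $q = 1$, $r = 2$, $p = 2$. These satisfy the hypothesis $0 < q \leq 1$, $q \leq r \leq \infty$, $1 \leq p \leq \infty$, and plugging into inequality (\ref{vectorstable1}) gives immediately
\begin{align*}
\|\hat{x}^{l_2} - x\|_2 \leq \Big(\max\{|T^c|/k, 1\} + 1 + ((|T|-k)/k)^{1/2}\Big)\Big(\tfrac{D_1}{1-\beta}(\varepsilon+\eta) + \tfrac{2\beta}{1-\beta} k^{-1/2}\sigma_k(x_T)_1\Big),
\end{align*}
which is the claimed bound, and symmetrically for $\hat{x}^{DS}$ using $D_2$ in place of $D_1$. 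No obstacle arises beyond verifying the bookkeeping of constants: since both ingredient theorems have already been established and the parameter map $(q,r,p) = (1,2,2)$ sits in the intersection of their domains, the result follows by composition.
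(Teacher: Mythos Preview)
Your proposal is correct and matches the paper's approach exactly: the paper states explicitly (in the sentence preceding the theorem) that the result follows from combining Theorem \ref{RIPSRP} with Theorem \ref{vectorstablerecovery}, and offers no further argument. Your parameter identification $(q,r,p)=(1,2,2)$ and the verification that $\beta<1$ are precisely the bookkeeping needed to carry this out.
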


\begin{remark}
It should be point out that Theorem \ref{RIPSRP} and Theorem \ref{vectorstabilityrobustnessestimate} also hold for $1<t<4/3$ with the same proof. However the bound $\sqrt{(t-1)/t}$ is not sharp for $1<t<4/3$. See Cai and Zhang's \cite{CZ2014} for further discussions.
\end{remark}

\begin{remark}
We point out that if we take $T^c=\emptyset$, Theorem \ref{RIPSRP} implies that if  $A$ has restricted 2-isometry property of order $tk$ with $\delta_{tk}<\sqrt{(t-1)/t}$ for some $t\geq 4/3$ , then $A$ satisfies $(l_2,l_1)-l_2$ sparse approximation property of order $k$ certain constants $0<\beta<1$ and $D_1>0$,  and $(l_2,l_1)-\text{Dantzig~selector}$ sparse approximation property of order $k$ with certain constants $0<\beta<1$ and $D_2>0$, i.e $l_2$-robust null space property of order $k$ \cite{F2014}. This shows that the $(l_2,l_1)-l_2$ sparse approximation property of order $k$ and and $(l_2,l_1)-\text{Dantzig~selector}$ sparse approximation property of order $k$ is weaker than restricted 2-isometry property of order $tk$.
\end{remark}

Combining this result with \cite[Theorem 1.1]{S2011}, we get following robust estimate, which is same as \cite[Theorem 2.1]{CZ2014} but with different constants.

\begin{theorem}\label{VS-Cai-Zhang-2014}
Consider the signal model (\ref{systemequationsnoise}) with $\|z\|_2\leq\varepsilon$ and let $\hat{x}^{l_2}$ is the minimizer of (\ref{lqminimization}) with $\mathcal{B}=\mathcal{B}^{l_2}(\eta)$ defined in (\ref{lpboundednoise}) for some $\eta\geq\varepsilon$. If measurement matrix $A$ has restricted 2-isometry property of order $tk$ with $\delta_{tk}<\sqrt{(t-1)/t}$ for some $t\geq4/3$, then
\begin{align*}
\|\hat{x}^{l_2}-x\|_2&\leq
\frac{2\sqrt{1+\delta}\big(3\sqrt{(t-1)(1-\delta^2)}+\delta\big)}{(1-\delta^2)(\sqrt{(t-1)(1-\delta^2)}-\delta)}(\varepsilon+\eta)\\
&\hspace*{12pt}+\frac{\big(\sqrt{(t-1)(1-\delta^2)}+\delta\big)^2}{\sqrt{(t-1)(1-\delta^2)}\big(\sqrt{(t-1)(1-\delta^2)}-\delta\big)}\frac{2\sigma_k(x)_1}{\sqrt{k}}.
\end{align*}

Consider the signal model (\ref{systemequationsnoise}) with $\|A^*z\|_\infty\leq\varepsilon$ and let $\hat{x}^{DS}$ is the minimizer of (\ref{lqminimization}) with $\mathcal{B}=\mathcal{B}^{DS}(\eta)$ defined in (\ref{Dantzigselectornoise}) for some $\eta\geq\varepsilon$. If measurement matrix $A$ has restricted 2-isometry property of order $tk$ with $\delta_{tk}<\sqrt{(t-1)/t}$ for some $t\geq4/3$, then
\begin{align*}
\|\hat{x}^{DS}-x\|_2&\leq
\frac{2\sqrt{2k}\big(3\sqrt{(t-1)(1-\delta^2)}+\delta\big)}{(1-\delta^2)(\sqrt{(t-1)(1-\delta^2)}-\delta)}(\varepsilon+\eta)\\
&\hspace*{12pt}+\frac{\big(\sqrt{(t-1)(1-\delta^2)}+\delta\big)^2}{\sqrt{(t-1)(1-\delta^2)}\big(\sqrt{(t-1)(1-\delta^2)}-\delta\big)}\frac{2\sigma_k(x)_1}{\sqrt{k}}.
\end{align*}
\end{theorem}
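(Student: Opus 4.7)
The approach is to combine two ingredients already in hand: Theorem~\ref{RIPSRP} specialized to the classical case $T^c = \emptyset$, together with Sun's stable recovery result \cite[Theorem 1.1]{S2011}.

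First, I would specialize Theorem~\ref{RIPSRP} by taking $T = \{1, \ldots, n\}$, so that $|T^c| = 0$ and $k + |T^c| = k$. As pointed out in the remark immediately preceding Theorem~\ref{VS-Cai-Zhang-2014}, the assumption $\delta_{tk} < \sqrt{(t-1)/t}$ with $t \geq 4/3$ then yields the $(l_2, l_1) - l_2$ sparse approximation property of order $k$ with constants
\begin{equation*}
\beta \;=\; \frac{\delta}{\sqrt{(t-1)(1-\delta^2)}} \in (0,1), \qquad D_1 \;=\; \frac{2\sqrt{1+\delta}}{1-\delta^2},
\end{equation*}
and simultaneously the corresponding $(l_2, l_1) - \text{Dantzig~selector}$ version with the same $\beta$ and $D_2 = 2\sqrt{2k}/(1-\delta^2)$. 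Note that $\sigma_k(x_T)_1 = \sigma_k(x)_1$ since $T$ is the full index set.

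Next, I would invoke \cite[Theorem 1.1]{S2011}, which converts an SRP estimate of the form $\|x_K\|_2 \leq D\|Ax\|_2 + \beta k^{-1/2}\sigma_k(x)_1$ (respectively its Dantzig-selector analogue) into an explicit stable recovery bound for $\|\hat{x} - x\|_2$ whose coefficients are rational expressions in $D$ and $\beta$. Plugging in each of the two specializations from the previous step produces two bounds of exactly the shape claimed, one for the $l_2$-noise constraint and one for the Dantzig-selector constraint.

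Finally, I would simplify the resulting constants. Setting $u := \sqrt{(t-1)(1-\delta^2)}$, the identities $1 - \beta = (u-\delta)/u$ and $1 + \beta = (u+\delta)/u$ convert the combinations $D_i/(1-\beta)$ and $(1+\beta)^2/[u(1-\beta)/u]$ produced by Sun's theorem into the closed-form expressions displayed in Theorem~\ref{VS-Cai-Zhang-2014}; the Dantzig-selector bound follows by the same substitution with $\sqrt{1+\delta}$ replaced by $\sqrt{2k}$. The hard part is purely bookkeeping: one must verify that the numerators and denominators appearing in the statement reconcile exactly with Sun's constants under $\beta = \delta/u$. There is no genuine technical obstacle, because both Theorem~\ref{RIPSRP} and \cite[Theorem 1.1]{S2011} are already available; the purpose of this result is simply to exhibit, by a transparent composition of the two, constants that may be directly compared with those of \cite[Theorem 2.1]{CZ2014}.
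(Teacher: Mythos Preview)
Your proposal is correct and follows precisely the route the paper takes: the text introduces Theorem~\ref{VS-Cai-Zhang-2014} with the sentence ``Combining this result with \cite[Theorem 1.1]{S2011}, we get following robust estimate,'' and gives no further argument, so specializing Theorem~\ref{RIPSRP} to $T^c=\emptyset$ and feeding the resulting constants $\beta=\delta/\sqrt{(t-1)(1-\delta^2)}$, $D_1$, $D_2$ into Sun's theorem is exactly the intended derivation. Your substitution $u=\sqrt{(t-1)(1-\delta^2)}$ is the right bookkeeping device for matching the displayed constants.
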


\begin{remark}
It also should be point out that Zhang and Li \cite{ZL2017} gave a complete answer to the
conjecture on restricted isometry property constants $\delta_{tk}~(0<t<4/3)$ which was proposed by Cai and Zhang \cite{CZ2014}.
They showed that when $0<t<4/3$, the condition $\delta_{tk}<t/(4-t)$ is sufficient to guarantee the exact recovery for all
$k$-sparse signals in the noiseless case via the constrained $\ell_q$-norm minimization for $0<q\leq 1$ and these bounds are also sharp.
\end{remark}

We propose the following conjecture.

\begin{conjecture}\label{SRPweakerRIP}
If $A$ has restricted 2-isometry property of order $tk$ with $\delta_{tk}<(4-t)/t$ for some $t\in(0,4/3)$ , then $A$ satisfies $(l_2,l_1)-l_2$ sparse approximation property of order $k$ with certain constants $0<\gamma<1$ and $0<D_3<\infty$,  and  $(l_2,l_1)-\text{Dantzig~selector}$ sparse approximation property of order $k$ with certain constants $0<\gamma<1$ and $0<D_4<\infty$.
\end{conjecture}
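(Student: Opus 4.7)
The plan is to extend the polytope-decomposition argument of Theorem \ref{RIPSRP} into the range $t \in (0,4/3)$ by invoking the sharp restricted isometry machinery of Zhang and Li \cite{ZL2017}. As a first step, I would keep the splitting $K_0 = K \cup T^c$ used in Theorem \ref{RIPSRP}, set $\alpha = \|x_{T \cap K^c}\|_1/(k+|T^c|)$, and partition $T \cap K^c$ into a heavy piece $K'$ and a light piece $K''$ according to a threshold chosen to fit the regime $t < 4/3$ (rather than the threshold $\alpha/(t-1)$ appropriate for $t \geq 4/3$). The light piece would then be written as a convex combination $x_{K''} = \sum_j \rho_j u^j$ of sparse vectors via \cite[Lemma 1.1]{CZ2014}, with the per-vector sparsity and $l_\infty$-size of each $u^j$ selected so that every vector $x_{K_0}+x_{K'} \pm \tau u^j$ appearing below remains $tk$-sparse for an appropriate weight $\tau$.

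The second step is to apply the standard polarization identity to $\langle A(x_{K_0}+x_{K'}), Ax\rangle$, writing it as the telescoping difference of $\|A((1\pm\tau)(x_{K_0}+x_{K'}) \pm \tau u^j)\|_2^2$ averaged against $\rho_j$. An upper bound follows from Cauchy--Schwarz together with either $\|Ax\|_2\|x_{K_0}+x_{K'}\|_2$ or $\sqrt{t(k+|T^c|)}\|A^*Ax\|_\infty\|x_{K_0}+x_{K'}\|_2$, and a lower bound comes from the restricted isometry property of order $tk$ applied to each $tk$-sparse vector produced by the polarization. Matching upper and lower bounds produces a quadratic inequality in $\|x_{K_0}\|_2$ whose solution, after absorbing the contribution of the $u^j$ via $\|u^j\|_2 \leq \sqrt{(k+|T^c|)/(t-1)}\,\alpha$ (or its $t<4/3$ analogue), would yield simultaneously the $(l_2,l_1)-l_2$ and the $(l_2,l_1)-\text{Dantzig~selector}$ sparse approximation estimates of order $k$.

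The hard part is precisely verifying that the resulting multiplier $\gamma$ in front of $\sigma_k(x_T)_1/\sqrt{k}$ is strictly less than $1$ under the hypothesized RIP bound. In the Cai--Zhang regime $t \geq 4/3$ handled in Theorem \ref{RIPSRP}, the multiplier $\delta/\sqrt{(t-1)(1-\delta^2)}$ falls below $1$ exactly at the threshold $\delta < \sqrt{(t-1)/t}$, but for $t \in (0,4/3)$ that same quantity need not drop below $1$ on its own, so the naive extension of Fourcart's argument breaks down. Zhang and Li circumvent this for the \emph{exact-recovery} statement by introducing an extra weighted combination of RIP inequalities, and the technical heart of the conjecture is to adapt that weighting so that it yields an SAP-type tail bound with the $l_1$-norm $\sigma_k(x_T)_1/\sqrt{k}$ and a constant $\gamma < 1$, simultaneously for the $l_2$-measurement and Dantzig-selector versions; this is the step where the present approach does not yet close, which is why the statement is only a conjecture.
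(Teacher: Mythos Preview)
There is no proof in the paper to compare against: the statement is explicitly labeled a \emph{conjecture}, and in Section~\ref{s4} the authors say outright that ``our method in Theorem~\ref{RIPSRP} may not fit for this question. This is one direction of our future research (see Conjecture~\ref{SRPweakerRIP}).'' So your proposal is not competing with a known argument; it is a sketch of an attack on an open problem, and your final paragraph correctly identifies that the attack does not close.

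Your diagnosis of \emph{why} it does not close is accurate and matches the paper's own assessment. The Foucart/Cai--Zhang polarization argument used in Theorem~\ref{RIPSRP} produces the coefficient $\delta/\sqrt{(t-1)(1-\delta^2)}$ in front of $\sigma_k(x_T)_1/\sqrt{k}$, and for $t<4/3$ this is not forced below $1$ by the hypothesized RIP bound; the Zhang--Li \cite{ZL2017} argument that handles $t\in(0,4/3)$ for exact recovery is structured around null-space vectors and a different weighted combination of RIP inequalities, and it is genuinely unclear how to rework it into an inequality of sparse-approximation-property form with an explicit $\|Ax\|_2$ (or $\|A^*Ax\|_\infty$) term and a sub-unit tail constant. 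That is precisely the gap the authors leave open.

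One minor point worth flagging: the bound in the conjecture as printed is $\delta_{tk}<(4-t)/t$, but for $t\in(0,4/3)$ this quantity exceeds $2$ and cannot constrain an isometry constant. Comparing with Remark~3.7 and Section~\ref{s4}, the intended hypothesis is almost certainly $\delta_{tk}<t/(4-t)$, matching the sharp Zhang--Li threshold. Any serious attempt should work under that corrected hypothesis.
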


\begin{remark}
We should point out that the above Theorem \ref{RIPSRP}, Theorem \ref{vectorstabilityrobustnessestimate} Theorem \ref{VS-Cai-Zhang-2014} and Theorem \ref{k2ksparsevector} also hold for matrix case.
\end{remark}

\section{Conclusions and Discussion \label{s4}}
\hskip\parindent

To solve truncated norm minimization (\ref{Truncatedq}) or (\ref{MatrixTruncatedq}) with noise, we introduce truncated sparse approximation property, and obtain the stable recovery of signals and matrices under the truncated sparse approximation property (see Theorem \ref{vectorstablerecovery} and Theorem \ref{Matrixstablerecovery}). When we investigate the relationship between the truncated sparse approximation property and  restricted isometry property, we find that if a measurement matrix satisfies truncated sparse approximation property of order $k$, then the first inequality in restricted isometry property of order $k$ and of order $2k$ hold for certain different constants $\delta_{k}$ and $\delta_{2k}$, respectively (see Theorem \ref{k2ksparsevector}). And we also find that the truncated sparse approximation property of order $k$ can be deduced from the restricted isometry property with $\delta_{tk}<\sqrt{(t-1)/t}$ for some $t\geq4/3$ (see Theorem \ref{RIPSRP}). And especially when $T^c=\emptyset$-the classic case, it shows us that the $(l_2,l_1)-l_2$ sparse approximation property of order $k$ and and $(l_2,l_1)-\text{Dantzig~selector}$ sparse approximation property of order $k$ is weaker than restricted 2-isometry property of order $tk$. And combining this result with \cite[Theorem 1.1]{S2011}, we get the stable recovery of signals and matrices under the restricted isometry property of order $tk$ with $\delta_{tk}<\sqrt{(t-1)/t}$ for some $t\geq 4/3$ (see Theorem \ref{VS-Cai-Zhang-2014}), which is the same as \cite[Theorem 2.1]{CZ2014} but with different constants. Therefore, these results in our paper may guide the practitioners to apply robust null space property in compressed sensing, low-rank matrix recovery and sparse phase retrieval, especially in truncated norm minimization problem.

However, Zhang and Li \cite{ZL2017} showed that for $0<t<4/3$, the condition $\delta_{tk}<t/(4-t)$ is sufficient to guarantee the exact recovery for all $k$-sparse signals in the noiseless case via the constrained $l_q$-norm minimization and these bounds are also sharp. Whether the truncated sparse approximation property of order $k$ can be deduced from the restricted isometry property with $\delta_{tk}<t/(4-t)$ for some $0<t<4/3$? But our method in Theorem \ref{RIPSRP} may not fit for this question. This is one direction of our future research (see Conjecture \ref{SRPweakerRIP}).
And another question is there exist't any algorithm for computing truncated $\ell_q$ minimization (\ref{Truncatedq}) with $\mathcal{B}=\mathcal{B}^{DS}(\eta)$ or $\mathcal{B}=\mathcal{B}^{l_q}(\eta)$ for $0<q\leq 1$ (see Remark \ref{Algorithm})? This is anther direction of our future research.


\textbf{Acknowledgement}: Wengu Chen is supported by National Natural Science Foundation of China (No. 11371183).




\bigskip

\noindent Wengu Chen

\medskip

\noindent
Institute of Applied Physics and Computational Mathematics, Beijing,
100088, People's Republic of China\\
\smallskip
\noindent{\it E-mail addresses}: \texttt{chenwg@iapcm.ac.cn} (W. Chen)

\bigskip

\noindent Peng Li

\medskip

\noindent Graduate School, China Academy of Engineering Physics, Beijing,
100088, People's Republic of China\\
\smallskip
\noindent{\it E-mail addresses}: \texttt{lipeng16@gscaep.ac.cn} (P. Li)


\begin{thebibliography}{SIBL}

\vspace{-0.3cm}

\bibitem{AR2015}A. Ahmed, J. Romberg. Compressive multiplexing of correlated signals, IEEE Trans. Inform. Theory, 2015, 61(1): 479-498.

\bibitem{AEP2008}A. Argyriou, T. Evgeniou, M. Pontil. Convex multi-task feature learning, Mach. Learn.,2008, 73(3): 243-272.

\bibitem{A2014}K. M. R. Audenaert. A generalisation of Mirsky¡¯s singular value inequalities, available at http://arxiv.org/abs/1410.4941, 2014.


\bibitem{BLWY2006} P. Biswas, T.-C. Lian, T.-C. Wang, Y. Ye. Semidefinite programming based algorithms for sensor network localization, ACM Transactions on Sensor Networks, 2006, 2(2): 188-220.

\bibitem{BG2010}I. Borg, P. Groenen. Modern Multidimensional Scaling, New York: Springer, 2010.

\bibitem{CW2011}T. T. Cai, L. Wang. Orthogonal matching pursuit for sparse signal recovery with noise, IEEE Trans. Inform. Theory, 2011, 57(7): 4680-4688.

\bibitem{CZ2013}T. T. Cai, A. Zhang. Compressed sensing and affine rank minimization under restricted isometry, IEEE Trans. Inform. Theory, 2013, 61: 3279-3290.

\bibitem{CZ2013-1}T. T. Cai, A. Zhang. Sharp RIP bound for sparse signal and low-rank matrix recovery, Appl. Comput. Harmon. Anal., 2013, 35: 74-93.

\bibitem{CZ2014}T. T. Cai, A. Zhang. Sparse representation of a polytope and recovery of sparse signals and low-rank matrices, IEEE Trans. Inform. Theory, 2014, 60(1): 122-132.

\bibitem{CZ2015}T. T. Cai, A. Zhang. ROP: Matrix recovery via rank-one projections, Ann. Statist., 2015, 43(1): 102-138.

\bibitem{CP2011}E. J. Cand\`{e}s, Y. Plan. Tight oracle inequalities for low-rank matrix recovery from a minimal number of noisy random measurements, IEEE Trans. Inform. Theory, 2009, 57(4): 2342-2359.

\bibitem{CT2005}E. J. Cand\`{e}s, T. Tao. Decoding by linear programming, IEEE Trans. Inform. Theory, 2005, 51(12): 4203-4215.

\bibitem{CT2006}E. J. Cand\`{e}s, T. Tao. Near optimal signal recovery from random projections: universal
encoding strategies?, IEEE Trans. Inform. Theory, 2006, 52(12): 5406-5425.

\bibitem{CT2007}E. J. Cand\`{e}s, T. Tao, The dantzig selector: Statistical estimation when
$p$ is much larger than $n$, Ann. Statist., 2007, 35(6): 2313-2351.

\bibitem{CP2011}A. Chambolle, T. Pock.  A first-order primal-dual algorithm for convex problems with
applications to imaging, J. Math. Imaging Vision, 2011, 40(1): 120-145.

\bibitem{CS2008}R. Chartrand, V. Staneva. Restricted isometry properties and nonconvex compressive sensing, Inverse Problems, 2008, 24(3): 035020-1-035020-14.

\bibitem{CDD2009}A. Cohen, W. Dahmen, R. A. DeVore. Compressed sensing and best $k$-term approximation, J. Amer. Math. Soc., 2009, 22(1): 211-231.

\bibitem{CP2011}P. Combettes, J.-C. Pesquet. Proximal splitting methods in signal processing, In Fixed-
Point Algorithms for Inverse Problems in Science and Engineering, ed. by H. Bauschke,
R. Burachik, P. Combettes, V. Elser, D. Luke, H. Wolkowicz (Springer, New York, 2011), 185-212.


\bibitem{DE2012}M. E. Davies, Y. C. Eldar. Rank awareness in joint sparse recovery, IEEE Trans. Inform. Theory, 2012, 58(2): 1135-1146.

\bibitem{DE2003}D. L. Donoho, M. Elad. Optimally sparse representations in general (nonorthogonal) dictionaries
via $\ell_1$ minimization, Proc. Natl. Acad. Sci. USA, 2003, 100(5): 2197-2202.

\bibitem{DET2006}D. L. Donoho, M. Elad, V. N. Temlyakov. Stable recovery of sparse overcomplete representations in the presence of noise, IEEE Trans. Inform. Theory, 2006, 52(1): 6-18.

\bibitem{D2004}S. T. Dumais. Latent semantic analysis,  Ann. Rev. Inf. Sci. Tech., 2004, 38(1): 188-230.


\bibitem{ES2005}S. Erickson, C. Sabatti. Empirical Bayes estimation of a sparse vector of gene expression changes, Statist. Appl. Genetics Mol. Biol., 2005, 4(1): Article22..

\bibitem{F2014}S. Foucart. Stability and robustness of $\ell_1$-minimizations with Weibull matrices and redundant dictionaries, Linear Algebra Appl., 2014, 441: 4-21.

\bibitem{F2017}S. Fourcat. Flavors of Compressive Sensing, International Conference Approximation Theory, Approximation Theory XV: San Antonio 2016, ed. by Fasshauer, G.E., Schumaker, L.L., 61-104.


\bibitem{FR2013}S. Foucart, H. Rauhut. A mathematical Introduction to Compressive Sensing, Applied and
Numerical Harmonic Analysis Series, New York: Birkh$\ddot{a}$user/Springer, 2013.

\bibitem{HS2009}M. Herman, T. Strohmer. High-resolution radar via compressed sensing, IEEE Trans. Signal
Process., 2009, 57(6): 2275-2284.

\bibitem{KKRT2016}M. Kabanava, R. Kueng, H. Rauhut, U. Terstiege. Stable low-rank matrix recovery via null space properties, Information and Inference, 2016, 5(4): 405-441.

\bibitem{KX2013}L. Kong, N. Xiu. Exact low-rank matrix recovery via nonconvex Schatten $p$-minimization, Asia-Pac. J. Oper. Res., 2013, 30(03): 1340010-1-1340010-13.

\bibitem{LLR1995}N. Linial, E. London, Y. Rabinovich. The geometry of graphs and some of its algorithmic applications, Combinatorica, 1995, 15(2): 215-245.

\bibitem{LV2009} Z. Liu, L. Vandenberghe. Interior-point method for nuclear norm approximation with application to system identification, SIAM J. Matrix Anal. Appl., 2009, 31(3): 1235-1256.

\bibitem{LDP2007}M. Lustig, D. Donoho, J. M. Pauly. Sparse MRI: The application of compressed sensing for
rapid MR imaging, Magnetic Resonance in Medicine, 2007, 58: 1182-1195.


\bibitem{OTJ2010}G. Obozinski, B. Taskar, M. I. Jordan. Joint covariate selection and joint subspace selection for multiple classification problems, Stat. Comput., 2010, 20(2):  231-252.

\bibitem{PVMH2008}F. Parvaresh, H. Vikalo, S. Misra, B. Hassibi. Recovering sparse signals
using sparse measurement matrices in compressed DNA microarrays, IEEE J. Sel. Top. Signal Process., 2008, 2(3): 275-285.


\bibitem{RFP2010}B. Recht, M. Fazel, P. A. Parrilo. Guaranteed minimum-rank solutions of linear matrix
equations via nuclear norm minimization, SIAM Rev., 2010, 52(3): 471-501.

\bibitem{SY2007}A M.-C. So, Y. Ye. Theory of semidefinite programming for sensor network localization,  Math. Program. Ser. B, 2007, 109: 367-384.

\bibitem{S2011}Q. Sun, Sparse approximation property and stable recovery of sparse signals from noisy measurements, IEEE Trans. Signal Process., 2011, 59(10): 5086-5090.

\bibitem{ST2016}Q. Sun, W.-S.Tang. Nonlinear frames and sparse reconstructions in Banach spaces, J. Fourier Anal. Appl., 2017,
    23(5): 1118-1152.

\bibitem{THER2010}G. Taub$\ddot{o}$ck, F. Hlawatsch, D. Eiwen, H. Rauhut. Compressive estimation of doubly selective
channels in multicarrier systems: leakage effects and sparsity-enhancing processing, IEEE J.
Sel. Top. Sig. Process, 2010, 4(2): 255-271.

\bibitem{VAHBPL2010}S. Vasanawala, M. Alley, B. Hargreaves, R. Barth, J. Pauly, M. Lustig, Improved pediatric
MR imaging with compressed sensing, Radiology, 2010, 256(2): 607¨C616.


\bibitem{WY2010}Y. Wang, W. Yin. Sparse signal reconstruction via iterative support detection, SIAM Journal on Imaging Sciences, 2010, 3(3): 462-491.

\bibitem{YS2016}M.-C. Yue, A. M.-C. So. A perturbation inequality for concave functions of singular values and its applications in low-rank matrix recovery, Appl. Comput. Harmon. Anal., 2016, 40(2): 396-416.

\bibitem{Z2013}X. Zhang. On compressive sensing related algorithms and application (in Chinese), Ph. D., Zhejiang University, Hangzhou, China, 2013.

\bibitem{ZHZ2013} M. Zhang, Z.-H. Huang, Y. Zhang. Restricted $p$ -isometry properties of nonconvex matrix recovery, IEEE Trans. Inform. Theory, 2013, 59(7): 4316-4323.

\bibitem{ZL2017}R. Zhang, S. Li. A proof of conjecture on restricted isometry property constants $\delta_{tk}~(0<t<\frac{4}{3})$, IEEE Trans. Inform. Theory, (2017), DOI: 10.1109/TIT.2017.2705741.

\bibitem{ZL2017-1}R. Zhang, S. Li. Optimal RIP bounds for sparse signals recovery via $\ell_p$ minimization, Appl. Comput. Harmon. Anal., 2017, doi:10.1016/j.acha.2017.10.004.


\end{thebibliography}
\end{document}